\newtheorem{theorem}{Theorem}
\newtheorem{lemma}[theorem]{Lemma}
\newtheorem{observation}[theorem]{Observation}
\theoremstyle{definition}
\newtheorem{definition}[theorem]{Definition}
\newtheorem{example}[theorem]{Example}
\newcommand{\PCSP}{\operatorname{PCSP}}
\newcommand{\Pol}{\operatorname{Pol}}
\newcommand{\id}{\operatorname{id}}
\newcommand{\Aff}{\operatorname{Aff}}
\newcommand{\ar}{\operatorname{ar}}
\newcommand{\LP}{\operatorname{LP}}
\newcommand{\bA}{\mathbf{A}}
\newcommand{\bB}{\mathbf{B}}
\newcommand{\bC}{\mathbf{C}}
\newcommand{\bD}{\mathbf{D}}
\newcommand{\bX}{\mathbf{X}}
\newcommand{\bF}{\mathbf{F}}
\newcommand{\ZZ}{\mathbb{Z}}
\newcommand{\NN}{\mathbb{N}}
\newcommand{\QQ}{\mathbb{Q}}
\newcommand{\FF}{\mathbb{F}}
\newcommand{\Mm}{\mathcal{M}}
\newcommand{\Nn}{\mathcal{N}}
\newcommand{\Qconv}{\mathcal{Q}_{conv}}
\newcommand{\MBA}{\Mm_{\mathrm{BLP+Aff}}}
\title{The Power of the Combined Basic LP and Affine \\ Relaxation for Promise
CSPs\thanks{An extended abstract of part of this work (by the first two authors) appeared in the
\emph{Proceedings of the 31st Annual ACM-SIAM Symposium on Discrete Algorithms
(SODA'20)}~\cite{bg20}.}}
\author{Joshua Brakensiek\thanks{Stanford University, Stanford, CA 94305, USA.
Email: {\tt jbrakens@stanford.edu}. Research supported in part by an REU
supplement to NSF CCF-1526092 and a NSF Graduate Research Fellowship.} \and
Venkatesan Guruswami\thanks{Computer Science Department, Carnegie Mellon
University, Pittsburgh, PA 15213. Email: {\tt venkatg@cs.cmu.edu}. Research
supported in part by NSF grants CCF-1814603 and CCF-1908125.} \and Marcin
Wrochna\thanks{University of Oxford, UK. Email: {\tt
marcin.wrochna@cs.ox.ac.uk}. Research supported by funding from the European
Research Council (ERC) under the European Union's Horizon 2020 research and
innovation programme (grant agreement No 714532).}
\and
Stanislav \v{Z}ivn\'y\thanks{University of Oxford, UK. Email: {\tt
standa.zivny@cs.ox.ac.uk}. Research supported by a Royal Society
University Research Fellowship and by funding from the European
Research Council (ERC) under the European Union's Horizon 2020 research and
innovation programme (grant agreement No 714532).}
}
\date{}
\begin{document}

\date{}

\maketitle

\begin{abstract}
  In the field of constraint satisfaction problems (CSP), promise CSPs
  are an exciting new direction of study. In a promise CSP, each
  constraint comes in two forms: “strict” and “weak,” and in the
  associated decision problem one must distinguish between being able
  to satisfy all the strict constraints versus not being able to
  satisfy all the weak constraints. The most commonly cited example of
  a promise CSP is the approximate graph coloring problem—which has
  recently seen exciting progress
  \cite{BulinKrokhinOprsal2019,wz20}
  benefiting from a systematic algebraic
 approach to promise CSPs based on ``polymorphisms,'' operations
  that map tuples in the strict form of each constraint to tuples in
  the corresponding weak form.

\smallskip
  In this work, we present a simple algorithm which in polynomial time
  solves the decision problem for all promise CSPs that admit
  infinitely many \emph{symmetric} polymorphisms, which are invariant under arbitrary coordinate permutations. This generalizes previous
  work of the first two authors \cite{BrakensiekGuruswami2019}.  We also extend
  this algorithm to a more general class of \emph{block-symmetric}
  polymorphisms. As a corollary, this single algorithm solves all
  polynomial-time tractable Boolean CSPs simultaneously. These results
  give a new perspective on Schaefer’s classic dichotomy theorem and shed
  further light on how symmetries of polymorphisms enable algorithms.
  Finally, we show that block symmetric polymorphisms are not only sufficient
  but also necessary for this algorithm to work, thus establishing its precise
  power.
\end{abstract}

\newcommand{\PCSPD}{\operatorname{PCSP-Decision}}
\newcommand{\PCSPS}{\operatorname{PCSP-Search}}

\section{Introduction}

A central challenge in the theory of algorithms is to understand the
mathematical structure (or lack thereof) that governs the efficient
tractability (or intractability) of a computational problem. For the
class of constraint satisfaction problems (CSP), a rich algebraic
theory culminating in the recent resolution of the Feder-Vardi
dichotomy conjecture~\cite{DBLP:journals/siamcomp/FederV98} in \cite{DBLP:conf/focs/Bulatov17,DBLP:conf/focs/Zhuk17} has
established a striking link between problem structure and its
tractability. In particular, a CSP is efficiently solvable if and only if its
defining relations admit an ``interesting'' polymorphism. Informally,
a polymorphism is a function whose component-wise action preserves
membership in the relations defining the CSP, and ``interesting''
means that the function obeys some \emph{non-trivial identities}. As an
example, for the (efficiently solvable) CSP corresponding to linear
equations over a ring $R$, the $3$-ary function $f(x,y,z) = x-y+z$ is
a polymorphism (capturing the fact that if $v_1,v_2,v_3$ are solutions
to a linear system, then so is $v_1-v_2+v_3$), and it obeys the
so-called Mal'tsev identity $f(x,y,y) = f(y,y,x) = x$ for all $x,y \in
R$. Indeed, generalizing Gaussian elimination, any CSP with such a
Mal'tsev polymorphism is efficiently tractable~\cite{DBLP:journals/eccc/ECCC-TR02-034,BulatovDalmau2006}.

Recently, an exciting new direction of study has emerged in the rich
backdrop of the complexity dichotomy for CSPs. This concerns a vast
generalization of the CSP framework to the class of \emph{promise
  constraint satisfaction problems} (PCSP). In a promise CSP, each
constraint comes in two forms: ``strict'' and ``weak.'' Given an
instance of a PCSP, one must distinguish between being able to satisfy
all the strict constraints versus not being able to satisfy all the
weak constraints. (This is the \emph{decision} version; in the
\emph{search} version, given an instance with a promised assignment
satisfying the strong form of the constraints, one seeks an assignment
satisfying the weak form of the constraints.) A prime example of a
PCSP is the approximate graph coloring problem, where one seeks to
color a graph using more colors than its chromatic number.

The formal study of promise CSPs originated in
\cite{DBLP:journals/siamcomp/AustrinGH17} who classified the
complexity of a PCSP called $(2+\epsilon)$-SAT. They further defined
an extension of polymorphisms to the promise setting and postulated
that the structure of those polymorphisms might govern the complexity
of a PCSP. (This extension of polymorphisms to the promise setting is
quite natural, requiring that the operation map tuples obeying the
strict form of a constraint to a tuple satisfying its weak form.)
Building on the impetus of \cite{DBLP:journals/siamcomp/AustrinGH17},
Brakensiek and Guruswami systematically studied PCSPs under
the polymorphic lens and established promising links to the
universal-algebraic framework developed for
CSPs~\cite{DBLP:conf/soda/BrakensiekG18,BrakensiekGuruswami2019}. It
emerged from these works that a rich enough family of polymorphisms
leads to efficient algorithms, whereas severely limited polymorphisms
are a prescription for hardness. However, unlike for CSPs, there is no
sharp transition between these cases --- the significant difficulty
being that, unlike for CSPs, polymorphisms for PCSPs are \emph{not}
closed under composition and lack the rich algebraic structure of a
\emph{clone} (c.f., \cite{BartoKrokhinWillard2017}). This nascent
algebraic theory for PCSPs was lifted to a more abstract level in
\cite{BulinKrokhinOprsal2019,BartoBulinKrokhinEtAl2019} and also led
to concrete breakthroughs in approximate graph
coloring/homomorphisms~\cite{BulinKrokhinOprsal2019,KrokhinOprsal2019,wz20}.
In
particular, while previous
works~\cite{DBLP:conf/soda/BrakensiekG18,BrakensiekGuruswami2019}
focused on the actual \emph{form} of the polymorphisms, the results of
\cite{BulinKrokhinOprsal2019} reveal that it is not the polymorphisms
themselves, but rather solely the \emph{identities} they
satisfy, that capture the complexity of the associated PCSP, extending a
similar phenomenon known earlier for
CSPs~\cite{BartoOprsalPinsker2018}.

This work concerns the theme of designing algorithms for PCSP based on
a rich enough family of polymorphisms. Our main result is that the
decision version of an arbitrary PCSP admitting an \emph{infinite
  family of symmetric polymorphisms} --- i.e., polymorphisms which are
invariant under any permutation of inputs --- is tractable (see
Theorem~\ref{thm:sym}). Our result also extends to the case of
\emph{block-symmetric} polymorphisms (see
Theorem~\ref{thm:block-sym}). That is, the coordinates can be
partitioned into ``blocks'' such that the function is invariant under
permutations within each block. Notably, in the block-symmetric case
the algorithm is identical--only the analysis changes. Furthermore,
the number of blocks is irrelevant, the only assumption we need is
that the minimum block size can be made arbitrarily large.
Our final result (Theorem~\ref{thm:char}) shows that block-symmetry is not
only sufficient but also necessary for our algorithm to work. In fact,
Theorem~\ref{thm:char} also establishes that without
loss of generality one can assume that there are only two blocks of symmetric
coordinates.

Further our algorithm is very simple --- it checks if the canonical
linear programming (LP) relaxation of the PCSP is feasible, and if so,
it further checks if a slight adaptation of a canonical \emph{affine
  relaxation} is feasible. The algorithm outputs satisfiable if both
these relaxations are feasible. The polymorphisms are not used in the
algorithm itself and only enter the analysis. The analysis is short
but subtle --- if we had symmetric polymorphisms of all arities then
it is known that the basic LP relaxation itself correctly decides
satisfiability, as one can round the fractional solution to a
satisfying assignment using the polymorphism after clearing
denominators of the fractional
solution~\cite{DBLP:conf/innovations/KunOTYZ12,BartoKrokhinWillard2017}. If
polymorphisms only exist of certain arities (e.g., all odd majorities),
then the LP alone doesn't suffice (e.g., \cite{DBLP:conf/innovations/KunOTYZ12}). We solve a linear system over the
integers corresponding to the affine relaxation which lets us adjust
the LP solution to match the arity at which a polymorphism exists. As
a subtle twist, the affine relaxation is not of the original PCSP, but
rather a \emph{refinement} of the CSP which results from throwing out
assignments to constraints which were ruled out by the basic LP.

It should be pointed out that we only solve the \emph{decision}
version of the PCSP, and \emph{not} the search version. Unlike CSPs,
for promise CSP there is no known reduction from search to decision,
even for special cases like approximate graph coloring. Our work might
be indicative of the subtle relationship between the search and
decision problems for promise CSPs.

We now compare our result here with the previous
work~\cite{BrakensiekGuruswami2019} where an algorithm was given to
solve (the search version of) any PCSP that admits an infinite family
of \emph{structured symmetric polymorphisms}. Examples of such
structured families include threshold and threshold-periodic
polymorphisms. The value of a threshold polymorphism (for a Boolean PCSP)
depends on whether the fraction of $1$s in the input belongs in a finite number of intervals.  (A basic example consists of Majority
functions of odd arities, which are polymorphisms for 2-SAT.)  A
threshold-periodic polymorphism can have a periodic behavior depending
on which interval the Hamming weight belongs to --- for example it can
be Majority for relative weights in $(1/3,2/3)$ and parity outside
this interval. More generally, one can generalize to the non-Boolean
case, as well as for the block-symmetric case, via regional
polymorphisms whose value depends on the geometric region in which the
vector of frequencies of the inputs to the polymorphisms lies. Due to
this geometric interpretation, \cite{BrakensiekGuruswami2019} assumes
a fixed number of blocks (corresponding to a fixed dimension), whereas
our new algorithm and analysis is independent of the number of
blocks. The algorithm was a combination of solving the LP relaxation
(albeit over a special ring like $\mathbb{Z}[\sqrt{2}]$ rather than
the rationals) and the affine relaxation over a large enough finite
ring.  The analysis relied on the special \emph{structure} of the
polymorphisms (beyond their full symmetry). In contrast, our result
here is more general, and only requires the polymorphism to be a
symmetric function --- its exact specifics or structure do not
matter. It is encouraging that our methodology is consistent with the 
algebraic result in \cite{BulinKrokhinOprsal2019} that the symmetries
possessed by the polymorphisms capture the complexity of the PCSP.

Our result and methods have significance even for normal (non-promise)
CSPs. For instance, we get a single unified algorithm to solve all
non-trivial tractable cases of Boolean CSPs in Schaefer's classic
dichotomy theorem~\cite{Schaefer:1978}, namely 2-SAT, Horn-SAT (or its
dual), and Mod-2 Linear Equations. The two main techniques to solve
CSPs are local propagation based algorithms (which work for the
so-called \emph{bounded-width}
CSPs~\cite{BartoKozik:2014,DBLP:conf/innovations/KunOTYZ12}, etc.) and
Gaussian elimination (which is a global algorithm that works for
linear equations). The major difficulty in proving the full CSP
dichotomy was tackling the complicated ways in which these two very
different algorithms might need to be interlaced to solve a general
CSP. It is our hope that this work serves as an impetus toward the
potential discovery of a more modular CSP algorithm that incorporates
together linear programming or its extensions (like Sherali-Adams, or
semidefinite programming) and linear equation solving. In this light,
it is encouraging that full symmetry of the polymorphisms, which is
indeed a strong assumption, is not the limit of our techniques, which
also extend to the block-symmetric case.

To put this work in further context, except for
\cite{BrakensiekGuruswami2019} as mentioned previously, nearly all
works in the PCSP
literature~\cite{DBLP:journals/siamcomp/AustrinGH17,DBLP:conf/soda/BrakensiekG18,FicakKozikOlsakEtAl2019}
focus primarily on the structure of the \emph{relations}. In
particular,
\cite{DBLP:conf/soda/BrakensiekG18,FicakKozikOlsakEtAl2019}
characterized the complexity of all \emph{Boolean symmetric relations}
(rather than \emph{symmetric polymorphisms}) which encompass many of
the known tractable cases of Boolean PCSP. As classified by
\cite{FicakKozikOlsakEtAl2019}, all the relevant tractable
polymorphisms are either symmetric functions or one special case of
block-symmetric known as alternative threshold (and variants). Thus,
in the context of PCSPs, the algorithm in this paper supersedes these
previous works. See Section~\ref{sec:block} for further discussion.

\section{Notation}

We let any finite set $A$ or $B$ denote a \emph{domain}.
A \emph{relation} is a subset $R \subseteq A^k$ for any positive integer $k$;
we denote $\ar(R) := k$. We define a \emph{signature} $\tau$  to be a set of symbols such that each $R \in \tau$ has a positive integer \emph{arity} $\ar(R)$. 

A \emph{relational structure} with signature $\tau$, denoted by $\bA := \{R^\bA \subseteq A^{\ar(R)} : R \in \tau\}$, is an indexed set of relations over $A$. A
\emph{homomorphism} between structures $\bA = \{R^\bA : R \in \tau\}$
and $\bB = \{R^\bB : R \in \tau\}$ with the same signature is a map $\sigma : A \to B$ such
that $\sigma(R^\bA) \subseteq R^\bB$ for all $R \in \tau$ (where $\sigma$ is
applied to a tuple component-wise).

Two relational structures for which there exists a homomorphism from the first to
the second is called a \emph{promise template} and is denoted as
$(\bA, \bB)$.

\subsection{PCSP: Decision and Search}

Consider a promise template $(\bA, \bB)$ with signature $\tau$.
An \emph{instance} $\bX$ of the \emph{promise constaint satisfaction problem} $\PCSP(\bA, \bB)$
consists of a set of variables $X := \{x_1, \hdots, x_n\}$,
and a set of constraints $c_1, \hdots, c_m$,
where $c_j := (R_j, \bar{x}^j)$, where $R_j$ is a symbol in $\tau$ and $\bar{x}^j$ is a tuple of arity $\ar(R_j)$.
We say that $\bX$ is \emph{satisfiable in $\bA$} if one can assign to every variable $x_i$ ($i \in [n]$) a value $\sigma(x_i)$ in the domain $A$
so that for every constraint $c_j = (R_j, \bar{x}^j)$ ($j \in [m]$),
the tuple $\sigma(\bar{x}^j)$ (with $\sigma$ applied component-wise) is in $R_j^\bA$.
Equivalently, $\bX$ can be described as a relational structure with domain $X$
and relations $R^\bX = \{ \bar{x} \in X^{\ar(R)} \colon \exists_{j \in [m]}\ c_j = (R,\bar{x}) \}$;
a satisfying assignment is then the same as a homomorphism from $\bX$ to $\bA$.
If $\bX$ is satisfiable in $\bA$, then it is satisfiable in $\bB$ (because the satisfying assignment can be composed with the homomorphism from $\bA$ to $\bB$).

We let $\PCSPD(\bA, \bB)$ denote the \emph{decision} problem of
distinguishing instances satisfiable in $\bA$ from those unsatisfiable in $\bB$
(with the promise that the input instance falls into one of these two disjoint cases).
We let $\PCSPS(\bA, \bB)$ denote the
\emph{search} problem of finding an explicit homomorphism from $\bX$ to $\bB$,
with the promise that a homomorphism from $\bX$ to $\bA$ exists.

\subsection{Polymorphisms}

A \emph{polymorphism} of $(\bA, \bB)$ of arity $L \in \NN$ is a map $f : A^L \to B$ such
that for all $R \in \tau$, $ R^\bB \supseteq f(R^\bA, \hdots, R^\bA)$ where we define the latter to be $\{(f(x^{(1)}_1, \hdots,
x^{(L)}_1), \hdots, f(x^{(1)}_{\ar(R)}, \hdots, x^{(L)}_{\ar(R)})) :
x^{(1)}, \hdots, x^{(L)} \in R^\bA\}.$
In other words, consider any $A^{L \times \ar(R)}$ matrix $M$, where
each \emph{row} is a satisfying assignment in $R^\bA$. Let
$y \in B^{\ar(R)}$ be the result of applying $f$ to each
\emph{column} of $M$. Then, $y \in R^B$.
We let $\Pol(\bA, \bB)$ denote the set of
polymorphisms of $(\bA, \bB)$ (of all arities).

A map $f : A^{L} \to B$ is said to be \emph{symmetric} if for all
$\pi \in S_{L}$ (the symmetric group on $L$ elements), $f(x_1, \hdots, x_{L}) = f(x_{\pi(1)}, \hdots,
x_{\pi(L)})$.

\subsection{Basic LP and Affine Relaxation} 
\label{subsec:relax}

As is well-studied in the CSP literature (e.g.,
\cite{RaghavendraSteurer2009,ThapperZivny2017}), we consider the
canonical linear programming relaxation of a CSP instance, often referred to as
the ``Basic LP'' or ``BLP.'' For our CSP instance $\bX$, we represent the
assignment $X \to A$ of a variable by a (rational) probability distribution of
weights $\{w_i(a)\}_{a \in A}$ summing to $1$. We also have a
probability distribution over the satisfying assignments to each
constraint, which we denote as $p_j({y})$, where $j \in [m]$ is the
index of the constraint and ${y}\in R_j^\bA$ is the potential assignment.
Finally, the marginal distribution of a variable $x_i$ in any constraint has to equal $w_i$.
Explicitly,
the linear constraints are as follows.
\allowdisplaybreaks
\begin{align} w_i(a) &\ge 0&&\text{for all $i\in[n]$ and $a\in A$} \label{eq:1}\\
p_j({y}) &\ge 0&&\text{for all $j\in[m]$ and ${y}\in R_j^\bA$} \label{eq:2}\\ \sum_{a \in
A}w_i(a) &= 1&&\text{for all }i\in[n]\label{eq:3}\\ \sum_{{y} \in R_j^\bA} p_j({y}) &=
1&&\text{for all }j \in [m]\label{eq:4}\\ \sum_{\substack{{y} \in R_j^\bA\\y|_i = a}}
p_j({y}) &= w_i(a)&&\parbox{4.8cm}{for all $i\in[n], a \in A, j \in [m]$
with $x_i$ in $\bar{x}^j$}.\label{eq:5}
\end{align}
Here $y|_i = a$ denotes that setting $\bar{x}^j={y}$ sets $x_i=a$
(that is, if $x_i$ is the $k$-th variable of the tuple $\bar{x}^j$, then $a = y_k$).
We let $\LP_\QQ(\bX, \bA)$ denote the rational polytope of
solutions. By a theorem of \cite{grotschel2012geometric} (c.f., \cite{BrakensiekGuruswami2019}), we can efficiently find a \emph{relative interior point} in this
polytope. In particular, at such a point, each coordinate is
positive if and only if it is positive at some point in the
polytope.\footnote{For our specialized LP, we do not need such a
  hammer. We can instead solve the LP repeatedly, each time maximizing a
  different variable as the objective function--a similar idea appears in~\cite{DBLP:conf/soda/BrakensiekG18}. Averaging the results
  would then yield a solution such that each variable is positive if
  and only if it is positive in some LP solution.}

In addition to the Basic LP, we also consider the affine relaxation of
a Promise CSP. In essence we solve the same linear system, but instead
of enforcing each variable to be a nonnegative rational, we enforce
that it is an integer (possibly negative). This can be solved in
polynomial time via \cite{kannan1979} (see also \cite{BrakensiekGuruswami2019} for a more
detailed discussion of this approach). We let $r_i(a) \in \mathbb Z$ replace $w_i(a)$ for all $a \in A$ and
$q_i({y}) \in \mathbb Z$ replace $p_i({y}) \in \mathbb Q$ for all ${y} \in R_j^\bA$. Explicitly,
\begin{align} \sum_{a \in A}r_i(a) &= 1&&\text{for all }i\in[n]\label{eq:6}\\
\sum_{{y} \in R_j^\bA} q_j({y}) &= 1&&\text{for all }j \in [m]\label{eq:7}\\
\sum_{\substack{{y} \in R_j^\bA\\{y}|_i = a}} q_j({y}) &=
r_i(a)&&\parbox{4.8cm}{for all $i\in[n], a \in A, j \in [m]$ with
$x_i$ in $\bar{x}^j$}.\label{eq:8}
\end{align} We let $\Aff_\ZZ(\bX, \bA)$ denote the integral lattice of
solutions.

\section{BLP+Affine Algorithm and Analysis for Symmetric Polymorphisms}

In the BLP+Affine algorithm, given an instance $\bX$ of $\PCSPD(\bA, \bB)$,
we seek to throw out any assignment to a constraint for which the LP determines to have weight $0$.
That is, 
given a relative interior point $(w,p)$ of $\LP_\QQ(\bX, \bA)$,
we refine $\Aff_\ZZ(\bX, \bA)$ to $\Aff_\ZZ'(\bX, \bA)$
by requiring $r_i(a)$ to be zero whenever $w_i(a)$ is,
and requiring $q_i(y)$ to be zero whenever $p_i(y)$ is
(by adding equations or just removing those variables from equations defining $\Aff_\ZZ(\bX, \bA)$).

The algorithm is presented in Figure~\ref{fig:Alg}. Note it does not depend on $\bB$; it is only relevant for the correctness proof.

\begin{figure}[h]
\begin{framed}
  \begin{enumerate}
  \item Find a relative interior point in $\LP_{\mathbb
Q}(\bX, \bA)$. If no solution exists, \textbf{Reject}.
  \item Refine $\Aff_\ZZ(\bX, \bA)$ to $\Aff_\ZZ'(\bX, \bA)$ by throwing out
assignments to constraints which have weight $0$ according to the relative interior point.
  \item If $\Aff_\ZZ'(\bX, \bA)$ is empty,
\textbf{Reject}. Else, \textbf{Accept}.
  \end{enumerate}
\end{framed}
\caption{BLP+Affine algorithm}
\label{fig:Alg}
\end{figure}

\begin{definition}\label{def:blpWorks}
	We say the BLP+Affine algorithm \emph{correctly solves} $\PCSPD(\bA,\bB)$
	if it accepts any instance $\bX$ satisfiable in $\bA$ 
	and rejects any instance unsatisfiable in $\bB$.
\end{definition}

As stated in the introduction, both the algorithm and the proof are structured similarly to those of \cite{DBLP:conf/innovations/KunOTYZ12} and \cite{BrakensiekGuruswami2019}. Like in those works, the weights of the LP solution and affine relaxation are used to construct a list of assignments which are plugged into the relevant polymorphism. The novel contribution here is that a single argument can cover any infinite symmetric family of polymorphisms.

\begin{theorem}\label{thm:sym} Let $(\bA, \bB)$ be a promise template (over any finite
domain) such that $\Pol(\bA, \bB)$ has symmetric polymorphisms of
arbitrarily large arities. Then, the BLP+Affine algorithm correctly solves $\PCSPD(\bA, \bB)$.
\end{theorem}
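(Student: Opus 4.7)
The plan is to verify both directions of Definition~\ref{def:blpWorks}. Soundness is routine: if $\sigma : X \to A$ is a satisfying assignment in $\bA$, then $w_i(a) = \mathbf{1}[\sigma(x_i)=a]$ and $p_j(y) = \mathbf{1}[\sigma(\bar{x}^j) = y]$ give a $0/1$ point in $\LP_\QQ(\bX,\bA)$ and in $\Aff_\ZZ(\bX,\bA)$. Since the support of this integer point is contained in the support of any relative interior LP point, it survives the refinement to $\Aff_\ZZ'(\bX,\bA)$, so the algorithm accepts.

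For completeness, I would start from a relative interior point $(w,p)$ of $\LP_\QQ(\bX,\bA)$ and an integer solution $(r,q)$ of $\Aff_\ZZ'(\bX,\bA)$, and combine them to manufacture a list of $L$ assignments per variable to feed into a symmetric polymorphism of arity $L$. Let $N$ be a common denominator of all $w_i(a)$ and $p_j(y)$, so that nonzero entries are at least $1/N$, and let $R = \max_{i,a,j,y}(|r_i(a)|, |q_j(y)|)$. Choose $L$ sufficiently large (say $L \geq N^2 R + N$) with $f \in \Pol(\bA,\bB)$ symmetric of arity $L$, which exists by hypothesis. Then set $\alpha = N \lfloor L/N \rfloor$ and $\beta = L - \alpha \in \{0,1,\dots,N-1\}$, and define multiplicities
\[
m_i(a) \;:=\; \alpha\, w_i(a) + \beta\, r_i(a), \qquad n_j(y) \;:=\; \alpha\, p_j(y) + \beta\, q_j(y).
\]
These are integers because $N \mid \alpha$, they sum to $\alpha + \beta = L$ over $a$ (resp.\ $y$), and they are nonnegative: when $w_i(a)=0$ the refinement forces $r_i(a)=0$, and when $w_i(a)>0$ we have $\alpha w_i(a) \geq \alpha/N \geq L/N - 1$, dominating $|\beta r_i(a)| \leq NR$ by the choice of $L$ (and analogously for $n_j(y)$).

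Next, for each constraint $c_j = (R_j,\bar x^j)$, I would arrange the $L$ tuples in $R_j^\bA$ --- with $y$ repeated $n_j(y)$ times --- as the rows of an $L \times \ar(R_j)$ matrix $M_j$. The marginal conditions (\ref{eq:5}) and (\ref{eq:8}) imply that the column of $M_j$ at position $k$, corresponding to the variable $x_i = \bar x^j_k$, contains exactly $m_i(a)$ copies of $a$. Because $f$ is symmetric, this multiset already determines a value $\sigma(x_i) := f(\text{that column})$, which depends only on $i$ and not on the choice of constraint or position in which $x_i$ appears, so $\sigma : X \to B$ is well defined. Applying $f$ column-by-column to $M_j$ produces a tuple in $R_j^\bB$ (since $f$ is a polymorphism and every row of $M_j$ lies in $R_j^\bA$), and that tuple equals $\sigma(\bar x^j)$. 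Hence $\sigma$ is a homomorphism from $\bX$ to $\bB$, showing $\bX$ is satisfiable in $\bB$.

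The main obstacle --- and the reason the proof needs both an LP and a refined affine component --- is that the hypothesis only guarantees symmetric polymorphisms of arbitrarily large arities, not of every large arity nor even of all large arities in a fixed arithmetic progression modulo $N$. If every arity were available, one could clear denominators and use $L = kN$ copies of the LP solution alone, as in \cite{DBLP:conf/innovations/KunOTYZ12}. Here, however, $\beta = L \bmod N$ may take any value in $\{0,\dots,N-1\}$, and it is precisely the affine solution $(r,q)$ that supplies an integral correction reconciling the LP weights with the fixed arity $L$. The refinement step (forcing $r$ and $q$ to vanish on the zero-support of the LP) is what guarantees that this correction preserves nonnegativity, and is the subtle twist that makes the combined relaxation strictly stronger than its two components in isolation.
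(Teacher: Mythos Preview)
Your proof is correct and follows essentially the same approach as the paper's: the paper likewise writes $L = u\ell + v$ (your $\alpha = u\ell$, $\beta = v$), forms the combined weights $W_i(a) = u\ell\,w_i(a) + v\,r_i(a)$ and $P_j(y) = u\ell\,p_j(y) + v\,q_j(y)$, verifies nonnegativity via the refinement and the bound $L \ge M\ell^2$, and then rounds by plugging multisets into a symmetric polymorphism, checking constraints via the matrix argument and Eqs.~(\ref{eq:5}),~(\ref{eq:8}). The only cosmetic difference is that the paper defines $X_i$ directly from the multiplicities $W_i(a)$ rather than via a column of some $M_j$, which avoids having to remark separately that the value is independent of $j$.
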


\begin{proof}
If an instance $\bX$ is satisfiable in $\bA$, then the Basic LP relaxation has a solution.
The refinement $\Aff_\ZZ'(\bX, \bA)$ includes every possible assignment which is in the
support of some LP solution, including integral solutions.
Thus it is non-empty and therefore the algorithm accepts.
  
Conversely, suppose the algorithm accepts, meaning both $\LP_{\mathbb
Q}(\bX, \bA)$ and $\Aff_\ZZ'(\bX, \bA)$ have solutions
$(w,p)$ over $\QQ_{\geq 0}$ and $(r,q)$ over $\ZZ$.
The latter is a solution of $\Aff_\ZZ(\bX, \bA)$ such that
\begin{align*}
	w_i(a) = 0 &\implies r_i(a) = 0 \quad\text{ for }i \in [n], a \in A\text{ and } \\
	p_j(y) = 0 &\implies q_j(y) = 0 \quad\text{ for }j \in [m], y \in R_j^\bA.
\end{align*}
We claim $\bX$ is satisfiable in $\bB$.
Among all the coordinates in the LP solution--the $w$'s and $p$'s--let $\ell$ be the
least common denominator of these rational numbers. Let $M$ be the
maximum absolute value of any integer which appears in the affine
solution (both the variable weights $r$ and the constraint weights $q$). Let
$f : A^L \to B$ be a symmetric polymorphism of arity
$L {}\geq M\ell^2.$ Now write $L = u\ell + v$ where
$u \in \mathbb Z_{\geq 0}$ and $v \in \{0, \hdots, \ell - 1\}$. Note that
$u \ge M\ell$.
  
For each
$i \in [n]$ and $a \in A$, let
  \[ W_i(a) := u\ell w_i(a) + vr_i(a).
  \]

  This is an integer by choice of $\ell$. For a fixed $i \in [n]$, note that by Eq.~\eqref{eq:3} and \eqref{eq:6}
  \[ \sum_{a \in A} W_i(a) = \sum_{a \in A}u\ell w_i(a) + vr_i(a) = u\ell + v = L.
  \] Also, for fixed $i \in [n]$ and $a \in A$, either $w_i(a) = 0$,
  which implies that $r_i(a) = 0$ by the refinement, so $W_i(a) =
  0$. Otherwise, $w_i(a) \ge 1/\ell$, so
  \[ W_i(a) \ge u\ell(1/\ell) + v(-M) \ge M\ell
- \ell M {}={} 0.
  \]
 That is, $W_i(a)$ for $a \in A$ are non-negative integers which sum to $L$.
  We claim that the assignment
  \[ X_i := f(\hdots, \underbrace{a, \hdots, a}_{W_i(a)\text{ times } \forall a \in A}, \hdots)
  \] to $x_i$ defines a satisfying assignment of $\bX$ in $\bB$.
  (Since $f$ is symmetric, only the quantity of each $a \in A$ in the input matters.)
  To verify it is indeed satisfying, consider a constraint in $(R_j,\bar{x}^j)$ (with $j \in [m]$)
  and assume without loss of generality it is on variables $\bar{x}^j=(x_1, \hdots, x_k)$.
  We claim $(X_1,\dots,X_k) \in R_j^\bB$.

  For every valid assignment $y \in R^{\bA}_j$ to that constraint in $\bA$, define
  \[
    P_j(y) := u\ell p_j(y) + v q_j(y).
  \]
  By similar logic as before, these are non-negative integers that sum to 1.
  Indeed, by Eqs.~\eqref{eq:4} and~\eqref{eq:7},
  \[\sum_{y \in R_j^\bA}P_j(y) = u\ell\sum_{y \in R_j^\bA}p_j(y) + v\sum_{y
    \in R_j^\bA}q_j(y) = L.\]
  Moreover, either $p_j(y) = q_j(y) = 0$, implying $P_j(y) = 0$, or
  \[
    P_j(y) \ge u\ell(1/\ell) + v(-M) \ge M\ell - \ell M {}={} 0.
  \]
  Further note that by Eqs.~\eqref{eq:5} and~\eqref{eq:8},
  \begin{align}
    W_i(a) & {} = u \ell w_i(a) + v r_i(a) \nonumber \\
    &= u \ell \sum_{\substack{y \in R_j^\bA\\y|_i=a}} p_i(y) + v
        \sum_{\substack{y \in R_j^\bA\\y|_i=a}} q_i(y) \nonumber \\
    &= \sum_{\substack{y \in R_j^\bA\\y|_i = a}} P_j(y)\label{eq:9}
  \end{align}

  For each $j \in [m]$ consider a matrix $M(j) \in A^{L \times k}$,
  where exactly $P_j(y)$ of the rows are equal to $y$.
  For all $i \in [k]$ and
  $a \in A$, the number of times that $a$ appears in column $i$
  is precisely $W_i(a)$ by Eq.~\eqref{eq:9}.  Thus, $f$ applied to the
  columns is precisely $(X_1, \hdots, X_k)$.
  Since $f$ is a polymorphism, this implies $(X_1, \hdots, X_k) \in R_j^\bB$.
  This concludes the proof that assigning the value $X_i$ to each variable $x_i$ (for $i \in [n]$)
  satisfies $\bX$ in $\bB$
  and hence that the algorithm is correct.
\end{proof}

\emph{Remark.}
  Another algorithm which works is to solve $\LP_{\ZZ[\sqrt{2}]}(\bX, \bA)$ (that is the constrained variables are over non-negative elements of the ring $\ZZ[\sqrt{2}]$) using the algorithm from~\cite{BrakensiekGuruswami2019}, instead of $\LP_\QQ(\bX, \bA)$. In this case, Steps 2 and 3 can be omitted. To sketch why this works, it suffices to justify why solving $\LP_{\mathbb Z[\sqrt{2}]}(\bX, \bA)$ also solves $\LP_\QQ(\bX, \bA)$ and $\Aff_\ZZ'(\bX, \bA)$. For each assigned value of the form $a + b\sqrt{2}$ in a relative interior solution to $\LP_{\mathbb Z[\sqrt{2}]}(\bX, \bA)$, consider changing this variable to $a + b\eta$, where $\eta$ is a sufficiently good rational approximation of $\sqrt{2}$. Such an assignment is in the relative interior of $\LP_\QQ(\bX, \bA)$ as any inequality non-trivially involving $\eta$, in particular \eqref{eq:1}, \eqref{eq:2}, is not tight due to $\sqrt{2}$ being irrational. To see why $\Aff_\ZZ(\bX, \bA)$ is also satisfied, replace each assigned value of $a + b\sqrt{2}$ with $a$. By inspection, this assignment (when changing $w_i$'s to $r_i$'s and $p_j$'s to $q_j$'s) satisfies $\Aff_\ZZ(\bX, \bA)$. It also satisfies $\Aff_\ZZ'(\bX, \bA)$ because $a + b\sqrt{2} = 0$ with $a$ and $b$ integral implies $a = 0$.

\section{Extension of Analysis to Block Symmetric Polymorphisms}\label{sec:block}

We say that a map $f : A^L \to B$. is \emph{block-symmetric} if
there exists a partition of the coordinates of $f$ into blocks $B_1
\cup \cdots \cup B_k = [L]$ such that $f$ is permutation-invariant
within each coordinate block $B_i$. We define the \emph{width} of $f$ to be
the minimum size of any block.\footnote{Note that a function $f$ might
  have different partitions into symmetric blocks; we define the width to be the
maximum width over all such partitions. In particular, every $f : A^L \to B$ is block-symmetric with width at least~1.
Finding the exact width or an appropriate partition into blocks is non-trivial. However, we avoid computing or evaluating $f$ altogether by only considering decision problems; see Section~\ref{sec:ques} for a discussion of search problems.\looseness=-1} A natural example of a block symmetric
polymorphism with nontrivial width is \emph{alternating threshold} first studied in \cite{DBLP:conf/soda/BrakensiekG18}
\[
  AT(x_1, \hdots, x_L) = 1[x_1 - x_2 + x_3 - \cdots \pm x_L \ge 1].
\]
In this case, the blocks are the odd and even coordinates. This
polymorphism arises in the context of $\bA$ corresponding to 1-in-3
SAT and $\bB$ corresponding to NAE-SAT. Recent work shows that this
PCSP, although tractable and simple to state, is not algebraically reducible (via so-called pp-constructions) to any
tractable finite-domain CSP~\cite{BartoBulinKrokhinEtAl2019}.

We now show an analogue of Theorem~\ref{thm:sym} for block-symmetric
polymorphisms. Remarkably, the algorithm is identical to the one for ordinary
symmetric polymorphisms and is independent of the number of blocks. In particular, it could be that the Promise CSP has finitely many polymorphisms for any particular number of blocks, yet has infinitely many block-symmetric polymorphisms of increasing width.

As discussed in \cite{BrakensiekGuruswami2019,FicakKozikOlsakEtAl2019}, nearly all known tractable Boolean PCSPs have polymorphisms which are either symmetric (such as threshold functions) or block-symmetric (such as alternating threshold). Thus, except for those PCSPs which are ``homomorphic relaxations''\footnote{A homomorphic relaxation of a $\PCSP(\bA,\bB)$ is another $\PCSP(\bC,\bD)$ such that $\bC$~has~a~homomorphism~to~$\bA$ and $\bB$ to $\bD$. In this case, $\PCSP(\bC,\bD)$ trivially reduces to $\PCSP(\bA,\bB)$. In general, if $(\bC,\bD)$ is a Boolean template that is a homomorphic relaxation of a tractable non-Boolean (P)CSP template, then this is the only algorithm we know for $\PCSP(\bC,\bD)$. We leave as an open question finding an explicit Boolean PCSP which is a homomorphic relaxation of a non-Boolean CSP but not correctly solvable by our BLP+Affine algorithm.} of a  tractable (P)CSP (c.f., \cite{BrakensiekGuruswami2019,BartoBulinKrokhinEtAl2019}), the algorithm presented here supersedes those works in the context of decision PCSP.

\begin{theorem}\label{thm:block-sym} Let $(\bA, \bB)$ be a promise template (over any finite
domain) such that $\Pol(\bA, \bB)$ has block-symmetric polymorphisms of
arbitrarily large width. Then, the BLP+Affine algorithm correctly solves $\PCSPD(\bA, \bB)$.
\end{theorem}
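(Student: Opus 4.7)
The plan is to adapt the argument of Theorem~\ref{thm:sym} by splitting the construction ``block by block'' rather than treating the polymorphism's inputs as one undifferentiated multiset. The completeness direction is unchanged: if $\bX$ is satisfiable in $\bA$, then the BLP is feasible and the integer refinement contains every such integral assignment. So I only need to argue soundness, assuming the algorithm accepts and producing an assignment into $\bB$ from $(w,p) \in \LP_\QQ(\bX,\bA)$ and $(r,q) \in \Aff_\ZZ'(\bX,\bA)$.

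I would let $\ell$ be the least common denominator of the LP solution and $M$ the maximum absolute value in the affine solution, exactly as before. The new ingredient is to choose a block-symmetric polymorphism $f : A^L \to B$ whose \emph{minimum block size} is at least $\ell^2 M$; this is possible precisely because $\Pol(\bA,\bB)$ contains block-symmetric polymorphisms of arbitrarily large width. Fix the associated partition $[L] = B_1 \cup \cdots \cup B_k$ with $|B_s| = L_s$, and for each $s$ write $L_s = u_s \ell + v_s$ with $0 \le v_s < \ell$. The crucial observation is that $L_s \ge \ell^2 M$ forces $u_s \ge \ell M \ge v_s M$, which is exactly what makes the per-block counts non-negative below.

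For each block $s$, define the per-block counts
\[
   W_i^{(s)}(a) := u_s \ell \, w_i(a) + v_s \, r_i(a),\qquad
   P_j^{(s)}(y) := u_s \ell \, p_j(y) + v_s \, q_j(y).
\]
These are integers by choice of $\ell$; using the refinement (so that $w_i(a)=0$ forces $r_i(a)=0$ and $p_j(y)=0$ forces $q_j(y)=0$) together with $u_s \ge v_s M$, they are non-negative. Summing over $a$ (resp.\ $y$) using \eqref{eq:3},\eqref{eq:6} (resp.\ \eqref{eq:4},\eqref{eq:7}) gives $L_s$, and \eqref{eq:5},\eqref{eq:8} combine exactly as in \eqref{eq:9} to give the marginal identity $\sum_{y|_i=a} P_j^{(s)}(y) = W_i^{(s)}(a)$ for every block $s$.

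Now I would build, for each constraint $(R_j,\bar{x}^j)$, a matrix $M(j) \in A^{L \times \ar(R_j)}$ by filling the rows of block $B_s$ with $P_j^{(s)}(y)$ copies of each $y \in R_j^{\bA}$. By the marginal identity, column $i$ of $M(j)$ restricted to block $B_s$ contains each $a \in A$ exactly $W_i^{(s)}(a)$ times. Since $f$ is symmetric within each block, applying $f$ to column $i$ yields a value $X_i$ that depends only on the multisets $(W_i^{(s)}(a))_{a\in A,\, s\in[k]}$, and in particular \emph{not} on $j$. The polymorphism property applied to $M(j)$ then gives $(X_1,\ldots,X_{\ar(R_j)}) \in R_j^{\bB}$, so $x_i \mapsto X_i$ is a homomorphism $\bX \to \bB$, proving that $\bX$ is satisfiable in $\bB$. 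The main subtlety, and the only real departure from the symmetric case, is verifying that the width lower bound $\ell^2 M$ simultaneously guarantees non-negativity in every block; once this is in place, the rest of the calculation is a block-wise mirror of the proof of Theorem~\ref{thm:sym}.
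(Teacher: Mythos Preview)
Your proposal is correct and follows essentially the same approach as the paper: you perform the construction of Theorem~\ref{thm:sym} separately within each block, using the width bound $\ell^2 M$ to guarantee $u_s \ge \ell M \ge v_s M$ and hence non-negativity of the per-block counts $W_i^{(s)}(a)$ and $P_j^{(s)}(y)$. The paper's proof is the same argument with only cosmetic differences in notation (indexing blocks by $b\in[\kappa]$ and writing $W_{b,i}(a)$, $P_{b,j}(y)$).
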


\begin{proof} The proof proceeds much like that of Theorem~\ref{thm:sym}. As before, we know that if 
$\bX$ is satisfiable in $\bA$, then the algorithm rejects.
We seek to show that if the algorithm accepts, then $\bX$ is satisfiable in $\bB$.

  Again, let $\ell$ be the least common denominator of all coordinates in
the LP solution. Let $M$ be the maximum absolute value of any integer which
appears in the affine solution. Let $f : A^{B_1 \cup \cdots \cup B_\kappa} \to B$ be a block-symmetric
polymorphism such that each block $B_b$, with $b \in [\kappa]$, has size at least
$M\ell^2$. Let $L_b = |B_b|$.  Similar to before, for all $b \in [\kappa]$, write $L_b = u_b\ell + v_b$ where $u_b \in \mathbb Z_{\geq 0}$ and $v \in \{0, \hdots, \ell - 1\}$. Note that
$u_b \ge M\ell$.

  We seek to show there exists \emph{a homomorphism from $\bX$ to $\bB$}. For each
$b \in [\kappa]$, $i \in [n]$ and $a \in A$, let
  \[ W_{b,i}(a) := u_b\ell w_i(a) + v_br_i(a).
  \]

  For a fixed $b \in [\kappa]$ and $i \in [n]$, by similar logic to the proof of Theorem~\ref{thm:sym}, we have that $W_{b,i}(a)$
  are non-negative integers for all $a \in A$ and
  \[\sum_{a \in A} W_{b,i}(a) = \sum_{a \in A}\big(u_b\ell w_i(a) + v_br_i(a)\big) = u_b\ell + v_b = L_b.\]

  We now claim that the assignment
  \[ X_i := f(\underbrace{\hdots, \underbrace{a, \hdots, a}_{W_{1,i}(a)\text{ times}}, \hdots}_{L_1\text{ total}}, \hdots, \underbrace{\hdots, \underbrace{a, \hdots, a}_{W_{\kappa,i}(a)\text{ times}}, \hdots}_{L_\kappa\text{ total}})
  \] to $x_i$ defines a satifying assigment of $\bX$ in $\bB$.
  To verify this, 
  consider a constraint in $(R_j,\bar{x}^j)$ (with $j \in [m]$)
  and assume without loss of generality it is on variables $\bar{x}^j=(x_1, \hdots, x_k)$.
  We claim $(X_1,\dots,X_k) \in R_j^\bB$.
  For all $b \in [\kappa]$ and  assignments $y \in R_j^\bA$ define
  \[
    P_{b,j}(y) := u_b\ell p_j(y) + v_b q_j(y).
  \]
  By similar logic as previously, $P_{b,j}(y)$ are non-negative integers and 
  by Eqs.~\eqref{eq:4} and~\eqref{eq:7},
  \[\sum_{y \in R_j^\bA}P_{b,j}(y) = u_b\ell\sum_{y \in R_j^\bA}p_j(y) + v_b\sum_{y
    \in R_j^\bA}q_j(y) = L_b.\]    
  Further note that by Eqs.~\eqref{eq:5} and~\eqref{eq:8} for $i\in[n], a \in A,$ and $j\in[m]$
  \begin{align}
    W_{b,i}(a) &= u_b \ell \sum_{\substack{y \in R_j^\bA\\y|_i=a}} p_j(y) + v_b
    \sum_{\substack{y \in R_j^\bA\\y|_i=a}} q_j(y) \nonumber \\
    &= \sum_{\substack{y \in R_j^\bA\\y|_i = a}} P_{b,j}(y)\label{eq:10}
  \end{align}
  
  For each $j \in [m]$ consider a matrix $M(j) \in A^{L \times k}$,
  where exactly $P_{b,j}(y)$ of the rows are equal to $y$ in the rows
  indexed by block $B_b$.
  For all $i \in [k]$ and $a \in A$, the number of times
  that $a$ appears in column $i$ and row-block $B_b$ is
  precisely $W_{b,i}(a)$ by Eq.~\eqref{eq:10}.  Thus, $f$ applied to
  the columns is precisely $(X_1, \hdots, X_k)$.
  Since $f$ is a polymorphism, this implies $(X_1, \hdots, X_k) \in R_j^\bB$.
  This concludes the proof that the algorithm is correct.
\end{proof}

\section{Characterizing the Algorithm's Power}\label{sec:char}
In this section, we characterize the power of the BLP+Affine algorithm from Figure~\ref{fig:Alg} exactly.
Recall, we denote the domains of relational structures $\bA,\bB,\bX$ as $A,B,X$.

\begin{theorem}\label{thm:char}
	Let $(\bA,\bB)$ be a promise template. The following are equivalent:
	\begin{itemize}
		\item BLP+Affine algorithm correctly solves $\PCSPD(\bA,\bB)$.
		\item $\Pol(\bA,\bB)$ has block-symmetric polymorphisms of arbitrarily high width.
		\item For every $L\in\NN$, $\Pol(\bA,\bB)$ has a block-symmetric polymorphism of arity $2L+1$ with two symmetric blocks of variables of size $L$ and $L+1$, respectively.
	\end{itemize}
\end{theorem}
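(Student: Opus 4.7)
The plan is to establish the cycle $(3) \Rightarrow (2) \Rightarrow (1) \Rightarrow (3)$. The first implication is immediate: two blocks of sizes $L$ and $L+1$ give width $L$, which is unbounded as $L$ grows. The second is exactly Theorem~\ref{thm:block-sym}. The substance therefore lies in $(1) \Rightarrow (3)$. For each $L \in \NN$, I would construct a specific ``universal'' instance $\bX_L$ of $\PCSP(\bA,\bB)$, exhibit explicit witnesses showing that the BLP+Affine algorithm accepts $\bX_L$, and invoke correctness (in its contrapositive form: acceptance rules out unsatisfiability in $\bB$) to obtain a homomorphism $\bX_L \to \bB$ from which the required polymorphism is read off.

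The construction of $\bX_L$ mimics standard free-structure arguments. I would take as variables all pairs $(m, n)$ with $m$ a size-$L$ and $n$ a size-$(L+1)$ multiset over $A$. For every $R \in \tau$ of arity $r$ and every pair of matrices $X \in A^{L \times r}$, $Y \in A^{(L+1) \times r}$ whose rows all lie in $R^\bA$, I would place into $R^{\bX_L}$ the tuple $(v_1, \ldots, v_r)$, where $v_i$ is the pair consisting of the multiset of the $i$-th column of $X$ and the multiset of the $i$-th column of $Y$. Any homomorphism $h : \bX_L \to \bB$ then yields
\[
f(a_1, \ldots, a_{2L+1}) := h\bigl(\{a_1,\ldots,a_L\},\, \{a_{L+1},\ldots,a_{2L+1}\}\bigr),
\]
which is block-symmetric with the prescribed block structure by construction; the polymorphism condition on $f$ follows by applying $h$ to the constraint whose generating matrices $(X,Y)$ have as their $2L+1$ rows precisely the tuples $y^{(1)}, \ldots, y^{(2L+1)} \in R^\bA$ being fed to $f$ columnwise.

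To show BLP+Affine accepts $\bX_L$, write $m(a)$ for the multiplicity of $a$ in a multiset $m$ and $X(y)$ for the number of rows of $X$ equal to a tuple $y$. For BLP I would set
\[
w_{(m,n)}(a) := \tfrac{1}{2}\Bigl(\tfrac{m(a)}{L} + \tfrac{n(a)}{L+1}\Bigr), \qquad p_c(y) := \tfrac{1}{2}\Bigl(\tfrac{X(y)}{L} + \tfrac{Y(y)}{L+1}\Bigr),
\]
on the constraint coming from $(X,Y)$; direct bookkeeping verifies \eqref{eq:1}--\eqref{eq:5}. For the affine side I would set $r_{(m,n)}(a) := n(a) - m(a)$ and $q_c(y) := Y(y) - X(y)$; these are integers summing to $|n|-|m|=1$ with marginals matching those of $r$. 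Crucially, $r_{(m,n)}(a) \neq 0$ forces $a$ to occur in $m$ or $n$, so $w_{(m,n)}(a) > 0$ in the above LP solution and a fortiori at any relative interior point; the same holds for $q_c$. Hence the affine witness lies in $\Aff_\ZZ'(\bX_L, \bA)$ regardless of which relative interior point the algorithm selects, the algorithm accepts, and correctness forces $\bX_L$ to be satisfiable in $\bB$.

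The main obstacle, in my view, is the design step: pairing the universal instance with witnesses that produce exactly two blocks of the prescribed sizes. The off-by-one choice $(L, L+1)$ is essential because it makes the natural ``difference of counts'' affine witness sum to $1$ without any rescaling; a larger gap would require scaling that breaks integrality or the refinement condition. Once the setup is fixed, everything else reduces to bookkeeping on multiset counts.
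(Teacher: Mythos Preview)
Your argument is correct and takes a genuinely different, more elementary route than the paper. The paper proves the equivalences by passing through an auxiliary minion $\MBA$ (pairs of a rational distribution and an integer ``distribution'' with matching support): it shows that BLP+Affine solves $\PCSPD(\bA,\bB)$ iff there is a minion homomorphism $\MBA \to \Pol(\bA,\bB)$ (via free structures and a de Bruijn--Erd\H{o}s compactness argument), that such a homomorphism produces the two-block polymorphism by sending the special object $w\equiv\tfrac{1}{2L+1}$, $r(i)=(-1)^{i+1}$, and conversely that block-symmetric polymorphisms of high width yield a minion homomorphism via the rounding trick of Theorem~\ref{thm:block-sym} plus a K\H{o}nig's-lemma compactness. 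You instead build, for each $L$, the concrete ``indicator instance'' $\bX_L$ whose $\bB$-satisfying assignments are exactly the desired two-block polymorphisms, and feed the algorithm explicit LP and affine witnesses; your averaged-frequency LP solution and difference-of-counts affine solution are precisely the incarnation of the paper's special $\MBA$ object on this instance. What you gain is a self-contained proof with no minion formalism and no compactness; what the paper gains is the identification of $\MBA$ as the characterizing minion, which slots into the general framework of~\cite{BartoBulinKrokhinEtAl2019} and makes the analogy with $\Qconv$ (for plain BLP) and $\mathcal{Z}_{\mathrm{aff}}$ (for the affine relaxation) explicit.
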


\noindent
We need a few definitions and fundamental facts from~\cite{BulinKrokhinOprsal2019,BartoBulinKrokhinEtAl2019}.
For an $L$-ary function $f \colon A^L \to B$ and a function $\pi \colon [L] \to [L']$,
the \emph{minor} of $f$ obtained from $\pi$ is the function $g \colon A^{L'} \to B$ defined as
\vspace*{-0.2\baselineskip}
\begin{equation}\label{eq:minor}
 g(x_1,\dots,x_{L'}) := f(x_{\pi(1)}, \dots, x_{\pi(L)}).
\end{equation}
We write $g = f_{/\pi}$. 
Thus sets of polymorphisms $\Pol(\bA,\bB)$ are equipped with an operation $(\cdot)_{/\pi}$
which maps $L$-ary polymorphisms to ${L'}$-ary polymorphisms (for every $\pi \colon [L] \to [L']$).
We consider such a structure more abstractly, allowing any objects to play the role of polymorphisms:

\begin{definition}
	A \emph{minion} $\Mm$ consists of sets $\Mm^{(L)}$ for $L \in \NN$
	and functions $(\cdot)_{/\pi} \colon \Mm^{(L)} \to \Mm^{(L')}$
	for all functions $\pi \colon [L] \to [L']$,
	such that compositions agree:
	$(f_{/\pi})_{/\tau} = f_{/\tau \circ \pi}$ for $\pi \colon [L] \to [L']$, $\tau \colon [L'] \to [L'']$,
	and $f_{/\id}=f$.
	We write $\Mm$ for the disjoint union of $\Mm^{(L)}$, $L \in \NN$,
	and $\ar(f) = L$ for $f \in \Mm^{(L)}$.
	A \emph{minion homomorphism} $\xi \colon \Mm \to \Nn$ is a function which preserves arity and minors:
	$\ar(\xi(f)) = \ar(f)$ and $\xi(f_{/\pi}) = \xi(f)_{/\pi}$ for all functions $\pi \colon [L] \to [L']$.
\end{definition}
Note that the objects in a minion do not have to be functions, and the set $\Mm^{(L)}$ does not have to be finite, though this is true for minions $\Pol(\bA,\bB)$ with finite $\bA,\bB$.
Similarly the operations $(\cdot)_{/\pi}$ are not necessarily defined by Eq.~\eqref{eq:minor}, though this will always be the case when elements of a minion $f \in \Mm^{(L)}$ are $L$-ary function.
As an important example, consider the minion $\Qconv$ of convex combination functions, i.e. functions $\QQ^L \to \QQ$ of the form $w_1 x_1 + \dots + w_L x_L$ for $\sum_1^L w_i = 1$, $w_i \in \QQ_{\geq 0}$, with $(\cdot)_{/\pi}$ defined by Eq.~\eqref{eq:minor}.
We can describe the same minion more concisely by identifying a convex $L$-ary function with its $L$-tuple of coefficients $(w_1,\dots,w_L)$.
That is, the ``$L$-ary objects'' of the minion $\Qconv$ can be equivalently defined as distributions on $[L]$:
\[\Qconv^{(L)} = \{ w \colon [L] \to \QQ_{\geq 0} \  \mid \  \textstyle\sum_{i \in [L]} w(i) = 1\},\]
and for $\pi \colon [L] \to [L']$ and $w \in \Qconv^{(L)}$ one can define $w_{/\pi}$ as
\[ w_{/\pi}(i) := w(\pi^{-1}(i)) = \textstyle\sum_{j \in \pi^{-1}(i)} w(j) \quad \text{for }i \in [L'].\]%
\noindent%
This minion characterizes the power of the basic linear programming relaxation 
in the sense that BLP correctly solves $\PCSPD(\bA,\bB)$ (i.e. feasibility of $\LP_\QQ(\bX,\bA)$ implies $\bX$ is satisfiable in $\bB$ for all instances $\bX$) if and only if $\Qconv$ admits a minion homomorphism to $\Pol(\bA, \bB)$.
This was shown by Barto et al.~\cite[Theorem~7.9]{BartoBulinKrokhinEtAl2019}.
Our proof straightforwardly extends this part of the argument.

We first define the minion that plays the role of $\Qconv$ for the BLP+Affine relaxation.
It assigns two coefficients to every coordinate $i \in [L]$.

\begin{definition}\label{def:MBA}
	The minion $\MBA$ is defined as follows: for $L \in \NN$,
	its ``$L$-ary objects'' are
	\begin{alignat*}{4}
	\MBA^{(L)} := \{(w,r) \mid \ 
	& w \colon [L] \to \QQ_{\geq 0}, \quad & \textstyle\sum_{i \in [L]} w(i) = 1 &\\
	& r \colon [L] \to \ZZ, & \textstyle\sum_{i \in [L]} r(i) = 1 &\\
	&& \forall_{i\in [L]} \quad w(i)=0 \implies r(i)=0 & \ \}.	
	\end{alignat*}
	Equivalently, these could be seen as a function from $[L]$ to $\{(a, b) \in \QQ_{\geq 0} \times \ZZ : a = 0 \implies b = 0\}.$
	
	\noindent
	For $\pi \colon [L] \to [L']$ and $(w,r) \in \MBA^{(L)}$,
	we define the minor $(w,r)_{/\pi}$ as $(w', r')$, where
	\begin{alignat*}{3}
	w'(i) &:= w(\pi^{-1}(i)) = \textstyle\sum_{j \in \pi^{-1}(i)} w(j) & \\
	r'(i) &:= r(\pi^{-1}(i)),  & \text{ for } i\in[L'].
	\end{alignat*}
\end{definition}

It is easy to check this indeed defines a minion (the $ w(i)=0 \implies r(i)=0$ condition is preserved when taking a minor and composition of minors works as expected).
One could also think of a pair $(w,r) \in \MBA^{(L)}$ as an $L$-ary function on $\QQ^2$, $f(\binom{x_1}{y_1},\dots,\binom{x_n}{y_n}) = \binom{\sum w(i) x_i}{\sum r(i) x_i}$.

The minion $\MBA$ characterizes the BLP+Affine relaxation as follows.

\begin{lemma}\label{lem:solvesIffMinionHom}
	Let $(\bA, \bB)$ be a promise template.
	The following are equivalent:
	\begin{itemize}
		\item BLP+Affine correctly solves $\PCSPD(\bA,\bB)$ (Definition~\ref{def:blpWorks}).
		\item $\MBA$ admits a minion homomorphism to $\Pol(\bA,\bB)$.
	\end{itemize}
\end{lemma}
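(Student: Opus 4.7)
\emph{Plan.} I will prove the two implications separately: $(\Leftarrow)$ directly generalises the proof of Theorem~\ref{thm:sym}, and $(\Rightarrow)$ will follow the ``test-instance plus compactness'' pattern used in \cite[Theorem~7.9]{BartoBulinKrokhinEtAl2019} to characterise BLP via $\Qconv$.

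For $(\Leftarrow)$, suppose $\xi:\MBA\to\Pol(\bA,\bB)$ is a minion homomorphism and let $\bX$ be an instance accepted by BLP+Affine, with relative interior BLP solution $(w^*,p^*)$ and integer affine solution $(r,q)$ respecting the refinement. The refinement says precisely that $(w^*_i,r_i)$ lies in $\MBA^{(|A|)}$ for every variable $x_i$ and $(p^*_j,q_j)$ lies in $\MBA^{(|R_j^\bA|)}$ for every constraint, where the minion's index set is identified with $A$ or $R_j^\bA$ via fixed enumerations $A=\{a_1,\dots,a_{|A|}\}$ and $R_j^\bA=\{y^1,\dots,y^N\}$. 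Set $X_i:=\xi(w^*_i,r_i)(a_1,\dots,a_{|A|})$. For a constraint on variables $x_{i_1},\dots,x_{i_k}$ with relation symbol $R_j$, the coordinate projection $\pi_\ell: R_j^\bA\to A$, $y\mapsto y_\ell$, realises (via~\eqref{eq:5} and~\eqref{eq:8}) the minor identity $(p^*_j,q_j)_{/\pi_\ell}=(w^*_{i_\ell},r_{i_\ell})$. Applying $\xi$ and evaluating at the identity tuple yields $X_{i_\ell}=f_j((y^1)_\ell,\dots,(y^N)_\ell)$ with $f_j:=\xi(p^*_j,q_j)$; hence $(X_{i_1},\dots,X_{i_k})\in R_j^\bB$, being $f_j$ applied column-wise to a matrix whose rows lie in $R_j^\bA$.

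For $(\Rightarrow)$, given a finite $S\subseteq\MBA$ together with a finite list of designated minor relations $(w,r)_{/\pi}=(w',r')$ (both sides in $S$), I will form the test instance $\bX_S$ whose variables are equivalence classes of pairs $((w,r),\alpha)$ with $(w,r)\in S$ and $\alpha\in A^{\ar(w,r)}$, modulo the transitive closure of $((w,r),\alpha'\circ\pi)\sim((w,r)_{/\pi},\alpha')$ for the designated~$\pi$'s, and whose constraints, for every $(w,r)\in S$ of arity $L$, every $R\in\tau$ of arity $k$, and every $L\times k$ matrix $M$ with rows in $R^\bA$, apply $R$ to the classes of the columns of $M$ paired with $(w,r)$. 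I endow $\bX_S$ with a natural BLP+Affine solution in which the class $[((w,r),\alpha)]$ receives LP weight $\sum_{i:\alpha_i=a}w(i)$ and affine weight $\sum_{i:\alpha_i=a}r(i)$, and the constraint from matrix $M$ receives LP and affine distributions $\sum_{i:m^i=\beta}w(i)$ and $\sum_{i:m^i=\beta}r(i)$ on $\beta\in R^\bA$ (with $m^i$ the $i$-th row of $M$). Well-definedness on equivalence classes follows from $w_{/\pi}(i)=\sum_{j\in\pi^{-1}(i)}w(j)$ (and its analogue for $r$) combined with minor composition $(w_{/\pi})_{/\tau}=w_{/\tau\circ\pi}$; equations~\eqref{eq:1}--\eqref{eq:8} hold by direct calculation; and the refinement is automatic because whenever the algorithm's relative interior $w^*$ vanishes at a coordinate, so does our LP weight there, hence so does our affine weight by the $\MBA$ axiom $w=0\Rightarrow r=0$. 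Thus BLP+Affine accepts $\bX_S$; by hypothesis there is a homomorphism $h:\bX_S\to\bB$, and $\xi_S(w,r)(\alpha):=h([((w,r),\alpha)])$ will define a polymorphism (the free-structure constraints at arity $\ar(w,r)$ on the $(w,r)$-paired variables are all present in $\bX_S$) that realises the designated minor identities (forced by the identifications).

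To globalise, I will invoke Tychonoff compactness on $\prod_{(w,r)\in\MBA}\Pol(\bA,\bB)^{(\ar(w,r))}$, a product of finite sets since $\bA,\bB$ are finite; the minion homomorphism conditions $\xi((w,r)_{/\pi})=\xi(w,r)_{/\pi}$ each carve out a closed subset depending on only two coordinates, every finite sub-intersection is inhabited by a suitable $\xi_S$ (extended arbitrarily), and hence the full intersection is non-empty. The hard part will be verifying the refinement for $\bX_S$ with respect to a relative interior point we do not control; this is handled by the observation that the relative interior has the largest support among all BLP solutions, so its vanishing set is contained in the vanishing set of our specific feasible LP weights, and hence---by the $\MBA$ defining implication---in that of our affine weights.
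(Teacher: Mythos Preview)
Your proof is correct and follows essentially the same strategy as the paper, which routes the equivalence through the free structure $\FF_{\MBA}(\bA)$: acceptance by BLP+Affine is equivalent to $\bX \to \FF_{\MBA}(\bA)$, structural compactness reduces to $\FF_{\MBA}(\bA)\to\bB$, and \cite[Lemma~4.4]{BartoBulinKrokhinEtAl2019} converts this into a minion homomorphism. Your $(\Leftarrow)$ is exactly the composite ``$\bX\to\FF_{\MBA}(\bA)\to\bB$'' written out concretely, and your test instances $\bX_S$ are finite quotients of substructures of $\FF_{\MBA}(\bA)$, with Tychonoff replacing de Bruijn--Erd\H{o}s; the one delicate point peculiar to BLP+Affine (that the algorithm's relative-interior support contains the support of \emph{any} feasible LP point, so your hand-built $(w,r)$ pair really lands in $\Aff'_\ZZ$) you identify and handle correctly.
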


As the proof of this lemma directly extends the arguments by Barto et al.~\cite{BartoBulinKrokhinEtAl2019},
we refer the reader to Appendix~\ref{app:qconv} for an exposition of it.

We now reinterpret this last condition in terms of concrete polymorphisms.
One direction is simple:

\begin{lemma}\label{lem:minionHomImpliesPolym}
	Suppose $\MBA$ has a minion homomorphism to some minion $\Nn = \Pol(\bA, \bB)$.
	Then for every $L \in \NN$, $\Nn$ contains a block-symmetric polymorphism of arity $2L+1$ with two blocks of size $L$ and $L+1$.
\end{lemma}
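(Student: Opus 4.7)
My plan is to exhibit a single element of $\MBA^{(2L+1)}$ that is invariant under the natural action of $S_L \times S_{L+1}$ on $[2L+1]$, and then push it through the given minion homomorphism $\xi : \MBA \to \Nn$. Concretely, I would define $(w, r) \in \MBA^{(2L+1)}$ by setting $w(i) := 1/(2L+1)$ for every $i \in [2L+1]$, and
\[ r(i) := \begin{cases} -1 & \text{if } i \in B_1 := \{1,\dots,L\}, \\ +1 & \text{if } i \in B_2 := \{L+1,\dots,2L+1\}. \end{cases} \]
Checking that $(w,r) \in \MBA^{(2L+1)}$ is immediate: $\sum_i w(i) = 1$, $\sum_i r(i) = -L + (L+1) = 1$, and the implication $w(i)=0 \Rightarrow r(i)=0$ is vacuous because $w$ is strictly positive everywhere.

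Next I would observe that $(w, r)$ is invariant under every permutation $\pi$ of $[2L+1]$ which preserves the partition $B_1 \cup B_2$: indeed $w$ is constant on all of $[2L+1]$ and $r$ is constant on each block, so the minor $(w,r)_{/\pi}$ coincides with $(w, r)$ in both coordinates. Setting $f := \xi(w, r) \in \Nn^{(2L+1)}$, the fact that $\xi$ is a minion homomorphism gives
\[ f_{/\pi} \;=\; \xi(w,r)_{/\pi} \;=\; \xi\bigl((w,r)_{/\pi}\bigr) \;=\; \xi(w,r) \;=\; f \]
for every such $\pi$. Unpacking how $(\cdot)_{/\pi}$ acts on an honest $(2L+1)$-ary function via Eq.~\eqref{eq:minor}, this equality says precisely that $f$ is fixed by $S_L \times S_{L+1}$, i.e., $f$ is block-symmetric with two blocks of sizes $L$ and $L+1$. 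Since $f \in \Nn = \Pol(\bA,\bB)$, it is the desired polymorphism.

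There is no serious obstacle here; the argument is a direct use of the minion-homomorphism axiom. The one small arithmetic observation driving the construction is $\gcd(L, L+1) = 1$, which guarantees integer solutions to $Lc + (L+1)d = 1$ and thus lets $r$ be a block-constant $\ZZ$-valued map summing to $1$; the simplest choice $c=-1, d=+1$ suffices. Conceptually, the key point is that a minion homomorphism transports invariance of $(w,r)$ under a permutation into symmetry of the image $f$ under the same permutation, so the only real task is to design an invariant pair $(w, r)$ witnessing the desired block pattern.
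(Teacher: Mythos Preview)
Your proof is correct and essentially identical to the paper's: the paper constructs the same element $(w,r)\in\MBA^{(2L+1)}$ with $w(i)=\tfrac{1}{2L+1}$ and $r(i)=(-1)^{i+1}$, uses the odd/even partition instead of your first-$L$/last-$(L+1)$ partition, and then pushes the block-invariance through the minion homomorphism exactly as you do. The choice of block labeling is cosmetic; the argument is the same.
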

\begin{proof}
	Given $L \in \NN$, consider the following object $(w,r) \in \MBA^{(2L+1)}$:
	take $w(i) := \frac{1}{2L+1}$ and $r(i) := (-1)^{i+1}$ for $i=1,\dots,2L+1$.
	For every permutation $\pi \colon [2L+1] \to [2L+1]$ which maps odd coordinates to odd coordinates (and even to even), $(w,r)_{/\pi} = (w,r)$.
	Thus the image of $(w,r)$ in $\Nn$  has the same property, i.e. it has arity $2L+1$ and it is symmetric on odd coordinates as well as on even coordinates.
\end{proof}
We remark the above lemma in fact applies to any minion $\Nn$, not only those of the form $\Pol(\bA, \bB)$; one can define $f \in \Nn^{(L)}$ to be block-symmetric with blocks $B_1 \cup \cdots \cup B_k = [L]$ if $f_{/\pi}=f$ holds for all permutations $\pi$ of $[L]$ that preserve the blocks; the proof then applies without change.

The idea for the other direction is essentially the same as in the proof of Theorem~\ref{thm:sym} and~\ref{thm:block-sym}.
We apply it to construct a minion homomorphism from every finite subset of $\MBA$ and use a compactness argument.

\begin{lemma}\label{lem:polymImpliesMinionHom}
	Suppose the minion $\Nn=\Pol(\bA,\bB)$ (for $\bA,\bB$ finite) contains block-symmetric polymorphisms of arbitrarily high width.
	Then $\MBA$ admits a minion homomorphism to $\Nn$.\looseness=-1
\end{lemma}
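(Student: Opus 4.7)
The plan is to lift the constructive arguments of Theorems~\ref{thm:sym} and~\ref{thm:block-sym} to manufacture the minion homomorphism, stitching together constructions that each depend on a single fixed polymorphism via a standard compactness argument.

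First I would set up the compactness framework. Since $\bA,\bB$ are finite, each $\Nn^{(L)}$ is a finite set; equip it with the discrete topology and form the compact product $P := \prod_{m \in \MBA} \Nn^{(\ar(m))}$. For each pair $(m,\tau)$ with $m \in \MBA^{(L)}$ and $\tau \colon [L] \to [L']$, let
\[
  H_{m,\tau} := \{\xi \in P : \xi(m_{/\tau}) = \xi(m)_{/\tau}\},
\]
a clopen subset of $P$. A minion homomorphism is precisely an element of $\bigcap_{m,\tau} H_{m,\tau}$, so by the finite intersection property it suffices to show that for every finite $\Sigma \subseteq \MBA$ there is an arity-preserving map $\xi_\Sigma$ defined on $\Sigma$ satisfying every minor identity witnessed inside $\Sigma$, i.e.\ $\xi_\Sigma(m_{/\tau}) = \xi_\Sigma(m)_{/\tau}$ whenever both $m$ and $m_{/\tau}$ lie in $\Sigma$.

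Next I would produce $\xi_\Sigma$ by adapting the proof of Theorem~\ref{thm:block-sym}. Let $\ell$ be a common denominator of all $w(i)$ appearing in elements of $\Sigma$, and let $M$ bound $|r(i)|$ over all such elements. Using the hypothesis, pick a block-symmetric polymorphism $f \in \Nn^{(N)}$ whose blocks $B_1 \cup \cdots \cup B_\kappa = [N]$ all have size at least $M\ell^2$. For each $(w,r) \in \Sigma$ of arity $L$ and each block $b$, write $|B_b| = u_b \ell + v_b$ with $0 \le v_b < \ell$, and set $W_b(i) := u_b \ell \, w(i) + v_b \, r(i)$. Exactly as in Theorem~\ref{thm:block-sym} -- using that $w(i)=0$ forces $r(i)=0$ by the definition of $\MBA$ -- each $W_b(i)$ is a non-negative integer with $\sum_i W_b(i) = |B_b|$. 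Choose any $\pi_{(w,r)} \colon [N] \to [L]$ sending exactly $W_b(i)$ elements of $B_b$ to $i$, and set $\xi_\Sigma(w,r) := f_{/\pi_{(w,r)}}$. Block-symmetry of $f$ makes this independent of the choice of $\pi_{(w,r)}$.

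Finally I would check the minor condition, which is where the combinatorial content sits. If $(w',r') = (w,r)_{/\tau}$ for $\tau \colon [L] \to [L']$, then $w'(i) = \sum_{j \in \tau^{-1}(i)} w(j)$ and similarly for $r'$, so by linearity $W'_b(i) = \sum_{j \in \tau^{-1}(i)} W_b(j)$. This says precisely that the composite $\tau \circ \pi_{(w,r)} \colon [N] \to [L']$ is a valid choice of $\pi_{(w',r')}$, giving
\[
  \xi_\Sigma\bigl((w,r)_{/\tau}\bigr) \;=\; f_{/\tau \circ \pi_{(w,r)}} \;=\; (f_{/\pi_{(w,r)}})_{/\tau} \;=\; \xi_\Sigma(w,r)_{/\tau}.
\]
The main subtlety is exactly this covariance of the weights $W_b(i)$ under taking minors in $\MBA$; once it is observed, block-symmetry of $f$ and Tychonoff compactness do the rest.
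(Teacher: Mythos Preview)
Your proof is correct and follows essentially the same approach as the paper: both define the partial homomorphism on a finite piece of $\MBA$ via the weights $W_b(i) = u_b\ell\,w(i) + v_b\,r(i)$ and minors of a single sufficiently wide block-symmetric polymorphism, verify covariance under $\tau$ exactly as you do, and then pass to a global homomorphism by compactness. The only cosmetic difference is that the paper organizes the compactness step via K\H{o}nig's lemma on an increasing chain $\Mm_k = \bigcup_{L\le k}\Mm^{(L)}_{k!,k}$, whereas you use the finite intersection property in the Tychonoff product $\prod_m \Nn^{(\ar(m))}$; both are standard and interchangeable here.
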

\begin{proof}
	To avoid cumbersome notation we present the proof only for the case of one block,
	i.e. we assume that $\Nn$ contains symmetric polymorphisms of arbitrarily high arity.
	This extends to more blocks just as Theorem~\ref{thm:block-sym} extends Theorem~\ref{thm:sym}.
	
	We define finite subsets of $\MBA$ as follows.
	For $L,\ell,M \in \NN$, let $\Mm^{(L)}_{\ell,M}$ be the subset of those $(w,r) \in  \MBA^{(L)}$ such that $\ell w(i) \in \ZZ$  for $i \in [L]$ and $\sum_i |r(i)| \leq M$.
	Observe that $\Mm^{(L)}_{\ell,M} $ is a finite set (since the numbers $\ell w(i)$ are $L$ non-negative integers summing to $\ell$ and the numbers $r(i)$ are $L$ integers between $-M$ and $M$).	
	Denote $\Mm_{\ell,M} := \bigcup_{L \in \NN}  \Mm^{(L)}_{\ell,M}$.	
	
	For fixed $\ell,M$, we define a minion homomorphism from $\Mm_{\ell,M}$ to $\Nn$ as follows.
	Let $f \in \Nn$ be a function of some arity $L^* \geq M \ell^2$.
	Let $u,v \in \NN$ be numbers such that $L^* = u \ell + v$, $v \in \{0,\dots,\ell-1\}$.
	Then $u \geq M \ell$.
	
	Take $L \in \NN$ and $(w,r) \in \Mm^{(L)}_{\ell,M}$.
	For $i \in [L]$, the number $W_i := u \ell w(i) + v r(i)$ is a non-negative integer.
	Since $\sum_i W_i = u \ell + v = L^*$,
	we can map $(w,r)$ to the $L$-ary minor $g := f(x_1, x_1, x_1, \dots, x_{L}, x_{L})$ of the $L^*$-ary function $f$ where $x_i$ is repeated $W_i$ times, for $i \in [L]$.
	We claim that this map is a minion homomorphism from $\Mm_{\ell,M}$ to $\Nn$
	(in fact to the subminion of minors of $f$).
	Indeed, for $\pi \colon [L] \to [L']$,
	consider the minor $g_{/\pi}$ of $g$ identifying $x_j$ for $j \in \pi^{-1}(i)$ into a single variable $z_i$ (for $i \in [L']$).
	We have that $g_{/\pi}$ is also a minor of $f$ where $z_i$ is repeated $\sum_{j \in \pi^{-1}(i)} W_j$ times.
	That is, $z_i$ is repeated $u \ell w(\pi^{-1}(i)) + v r(\pi^{-1}(i))$ times.
	By symmetry of $f$ the ordering does not matter, thus $g_{/\pi}$ (the minor of the image of $f$) is the same as the image of the minor $f_{/\pi}$.
	
	We conclude with a compactness argument similar to that of Remark 7.13 in \cite{BartoBulinKrokhinEtAl2019}. For $k \in \NN$, let $\Mm_k := \bigcup_{L\leq k} \Mm^{(L)}_{k!,k}$.
	Then $\Mm_k$ is finite, $\Mm_{k} \subseteq \Mm_{k+1}$ (because $k! \cdot w(i) \in \ZZ$ implies $(k+1)! \cdot w(i) \in \ZZ$) and $\bigcup_{k \in \NN} \Mm_k = \MBA$.
	Consider the possible minion homomorphisms from $\Mm_{k}$ to $\Nn$, or more precisely,
	restrictions of homomorphisms obtained above to $\Mm_{k}$
	(since $\Mm_k$ itself is technically not a minion).
	There are only finitely many possible such restrictions $\Mm_{k} \to \Nn$,
	because $\Mm_{k}$ is finite, the arities of images in $\Nn$ are bounded,
	and hence the number of possible images in $\Nn$ is also finite.
	Consider an infinite tree with restrictions from any $\Mm_k$ to $\Nn$ as nodes,
	the trivial map from $\Mm_0 = \emptyset$ being the root,
	and the parent of a function $\Mm_{k+1} \to \Nn$ being its restriction to $\Mm_{k}$.
	This is an infinite tree (because for each $k$ we have some minion homomorphism from a superset of $\Mm_k$ to $\Nn$) that is connected (because everyone is connected through its ancestors to the root)
	and finitely branching (because there are only finitely many restrictions $\Mm_k \to \Nn$, for any fixed $k$).
	Therefore, by K\H{o}nig's lemma, the tree contains an infinite path $\zeta_k \colon \Mm_k \to \Nn$ of homomorphisms that are restrictions of each other.
	Their union is then a homomorphism from $\bigcup_{k \in \NN} \Mm_k = \MBA$ to $\Nn$.
\end{proof}

(We remark the above proof in fact applies to any minion $\Nn$, assuming $\Nn^{(L)}$ is finite for every $L$.)
Lemmas~\ref{lem:solvesIffMinionHom}, \ref{lem:minionHomImpliesPolym}, and \ref{lem:polymImpliesMinionHom} conclude the proof of Theorem~\ref{thm:char}.

\section{Concluding Thoughts}\label{sec:ques}

We conclude with a few natural directions of future inquiry raised by this work.

Inspecting the proofs of Theorems~\ref{thm:sym} and~\ref{thm:block-sym}, in order to yield a search algorithm (and not just a decision algorithm), it would suffice to compute:
\[ X_i := f(\hdots, \underbrace{a, \hdots, a}_{W_i(a)\text{ times}}, \hdots)\]
for some block-symmetric polymorphism $f$ and a fixed partition into blocks of size at least $L$, for an integer $L$ which
depends polynomially on the least common denominator of rational numbers in the LP solution and the maximum absolute value of integers in the affine solution.
In previous work~\cite{BrakensiekGuruswami2019}, Brakensiek and Guruswami circumvented this problem by assuming that $f$ has special structure (such as being a threshold function, etc.). Even then, we often only assumed that you had oracle access to the structure of $f$. Thus, except for some simple cases studied in the paper, truly polynomial-time search algorithms remain elusive. Perhaps one could hope for a search algorithm like the decision algorithm presented in this paper which is oblivious to the underlying polymorphisms (as long as they are symmetric/block-symmetric).

\textbf{Question.}
  Is there an ``oblivious'' polynomial-time algorithm for the search version of Promise CSPs with infinitely many symmetric polymorphisms?

We note that an oblivious polynomial-time algorithm is also not known for the
search version of Promise CSPs with symmetric polymorphisms of all arities
(which capture the power of BLP~\cite[Theorem~7.9]{BartoBulinKrokhinEtAl2019}) and for the
search version of Promise CSPs with alternating polymorphisms of all odd arities
(which capture the power of the affine
relaxation~\cite[Theorem~7.19]{BartoBulinKrokhinEtAl2019}).

Otherwise, one could hope to prove a ``structure theorem'' that every Promise CSP with infinitely many symmetric polymorphisms also has an infinite threshold-periodic family. As \cite{BrakensiekGuruswami2019} shows, such polymorphisms can get exceedingly complicated, suggesting that such a characterization may only be possible in the Boolean case.

\textbf{Question.}
  Does every Boolean PCSP with infinitely many symmetric polymorphisms have an infinite threshold-periodic family? 

Even without a structure theorem, one could perhaps hope to compute the pertinent values of $f$ ``on the fly,'' but this seems difficult in our current formulation as the arity of $f$ could be exponentially large in the input size!

While Theorem~\ref{thm:char} characterizes the power of the BLP+Affine algorithm, it is still worthwhile to ask how this compares to other classes of templates, in particular those studied for non-promise CSPs.
The following example of a simple template not solved by the BLP+Affine relaxation was communicated to us by Jakub Opr\v{s}al.
\begin{example}
	Let $\bA$ be the disjoint union of a directed 2-cycle $\{0,1\}$ and a directed 3-cycle $\{0',1',2'\}$.
	Then $\bA$ is tractable template (i.e. $\PCSP(\bA,\bA)$ is solvable in polynomial time, in fact $\Pol(\bA,\bA)$ has cyclic polymorphisms of every prime arity $p>3$) but has no non-trivial block-symmetric polymorphisms.
\end{example}
\begin{proof}
	To see it admits no block-symmetric polymorphisms $f$ of width greater than one,
	observe that every such width can be represented as $2n+3n'$ for some $n,n' \in \NN$,
	hence every block can be filled with $n$ copies of values $0,1$ and $n'$ copies of $0',1',2'$, giving some input $\bar{v}$ to $f$.
	But $f$ should give the same output on the input $\bar{v}^{\oplus 1}$ consisting of $n$ copies of $1,0$ and $n'$ copies of $1',2',0'$.
	Since $(v_i, v^{\oplus 1}_i)$ is an arc of $\bA$ for every $i$ and since $f$ is a polymorphism, $(f(\bar{v}),f(\bar{v}^{\oplus 1}))$ would be a loop in $\bA$, a contradiction.

We now observe that $\PCSP(\bA,\bA)$ has a straightforward polynomial time algorithm. For each connected component of constraints, the variables must map to either $\{0, 1\}$ or $\{0',1',2'\}$. The first case is equivalent to testing if the graph of constraints is bipartite. The latter can be done by a breath-first search which checks that all directed cycles have length a multiple~of~$3$.
\end{proof}

Thus the condition of having block-symmetric polymorphisms of high width is not preserved under disjoint union, even though tractability is.
We also know that since $\Pol(\bA,\bA)$ has a majority polymorphism (simply let
$f(x,y,z)$ output $x$ if $x=y$ and $z$ otherwise), $\PCSP(\bA,\bA)$ can be
solved in polynomial time via the $(2,3)$-consistency algorithm, 3-rounds of
Sherali-Adams, or the canonical SDP relaxation (see also
\cite{BartoKozik:2014,ThapperZivny2017,BartoKrokhinWillard2017}). Informally,
these relaxations ensure that there are locally consistent assignments to every
(constant-sized) subset of variables. This consistency is quite powerful. For
instance, 2-SAT can be solved by the BLP+Affine relaxation  or 3 rounds of
Sherali-Adams, but not the BLP by itself. This suggests the tantalising
possibility that an analogous hierarchy could provide a uniform algorithm for
all tractable non-promise CSPs.

\textbf{Question.}
  Which (decision) promise CSPs can be solved via constantly many rounds of the Sherali-Adams hierarchy for the BLP+Affine relaxation? Does this capture all tractable non-promise CSPs?
  
\section*{Acknowledgments}
We thank Libor Barto, Andrei Krokhin, and Jakub Opr\v{s}al for useful comments and encouragement. We also thank anonymous reviewers for many helpful comments.

\appendix

\section{From Relaxations to Minion Homomorphisms}\label{app:qconv}
In this appendix, we recall the definition of the minion $\Qconv$ and prove Lemma~\ref{lem:solvesIffMinionHom} from Section~\ref{sec:char}.
We do this by explaining how free structures relate BLP and Affine relaxations to minions.
We carry over the notation from Section~\ref{sec:char}.

\begin{definition}
	The minion $\Qconv$ is defined as follows:
	for $L\in\NN$, the ``$L$-ary object'' of the minion are
	\[\Qconv^{(L)} := \{ \ w \colon [L] \to \QQ_{\geq 0} \  \mid \ \textstyle\sum_{i\in[L]} w(i) = 1 \ \};\]
	for $\pi \colon [L] \to [L']$ and $w \in \Qconv^{(L)}$, we define the minor $w_{/\pi}$ of $w$ as
	\[w_{/\pi}(i) = w(\pi^{-1}(i)) = \textstyle\sum_{j \in \pi^{-1}(i)} w(j). \]
\end{definition}

Let us describe how $\Qconv$ characterizes the power of the basic linear programming relaxation;
the case of BLP+Affine will be entirely analogous.
Recall that for an instance $\bX$ of $\PCSP(\bA,\bB)$, a solution to the BLP relaxation
assigns to each variable $i \in X$ a distribution $w_i \colon A \to \QQ_{\geq 0}$ with $\sum_{a \in A} w_i(a) = 1$.
It also assigns to each constraint $j$ of $\bX$ a distribution over satisfying assignments $p_j \colon R^A \to \QQ_{\geq 0}$ with sum 1.
Finally, the relaxation requires that for a variable $i$ in a constraint $j$ of $\bX$, the assignment of $a \in A$ to $i$ has value $w_i(a) = \sum_{y} p_j(y)$, where the sum runs over all satisfying assignments $y \in R^A$ of the constraint where the variable $i$ takes value $a$.

In other words, $w_i(a) = p_j(\pi^{-1}(a))$, where $\pi = \pi_{j\to i} \colon R^A \to A$ maps a satisfying assignment $y$ to the value of variable $i$ in constraint $j$.
That is, $w_i$, as an object of $\Qconv^{|A|}$, is required to be the minor of $p_j \in \Qconv^{|R^A|}$ obtained from $\pi$.
Thus the BLP relaxation of $\bX$ is satisfiable if and only if one can assign some $w_i \in \Qconv^{|A|}$
to each variable $i \in X$ so that the following holds for every constraint $j$ of $\bX$:
there is a $p_j \in \Qconv^{|R^A|}$ such that for all variables $i$ in $j$, $w_i = {p_j}{/\pi_{j\to i}}$.
This can be phrased as the existence of a homomorphism from $\bX$ to the free structure $\FF_{\Qconv}$, defined as follows.

\begin{definition}\label{def:Qfree}
	For a relational structure $\bA$ and a minion $\Mm$, the \emph{free structure} $\FF_{\Mm}(\bA)$ is
	a template with domain $\Mm^{|A|}$ (potentially infinite) and with the same signature as $\bA$.
	For each relation $R^A$ of arity $k$ in $\bA$,
	there is a relation $R^{\FF}$ of the same arity in $\FF_{\Mm}(\bA)$ defined as follows:
	$w_1,\dots,w_k \in \Mm^{(|A|)}$ are in the relation $R^{\FF}$
	if there is some $p \in \Mm^{(|R^A|)}$
	such that for each $i \in [k]$,
	$w_i = p_{/\pi_{i}}$.
	Here $\pi_{i} \colon R^A \to A$ maps $y \in R^A \subseteq A^k$ to its $i$-th coordinate.
\end{definition}

The above discussion shows that:
\begin{observation}
	 The BLP relaxation of $(\bX,\bA)$ has a solution if and only if $\bX$ is satisfiable in $\FF_{\Qconv}(\bA)$.
\end{observation}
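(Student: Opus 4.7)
The plan is to prove the observation by unpacking the two definitions and showing that the objects in the BLP solution correspond exactly, via the identification $w_i \in \Qconv^{(|A|)}$ and $p_j \in \Qconv^{(|R^\bA|)}$, to the data of a homomorphism $\sigma \colon \bX \to \FF_{\Qconv}(\bA)$. Most of the work was already implicit in the paragraph preceding the observation; what remains is to verify each of the BLP constraints~\eqref{eq:1}--\eqref{eq:5} from Section~\ref{subsec:relax} corresponds, line by line, to the defining conditions of $\Qconv^{(\cdot)}$ and of the free-structure relations in Definition~\ref{def:Qfree}.

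For the forward direction, I would start with a BLP solution $(w,p)$ and define $\sigma(x_i) := w_i$ for every variable $x_i$ of $\bX$. Conditions \eqref{eq:1} and \eqref{eq:3} say exactly that each $w_i$ is a nonnegative rational distribution on $A$ summing to $1$, i.e.\ $w_i \in \Qconv^{(|A|)}$, so $\sigma$ is a well-defined map into the domain of $\FF_{\Qconv}(\bA)$. Then I would check that $\sigma$ preserves each relation: fix a constraint $c_j=(R_j,\bar{x}^j)$ with $\bar{x}^j=(x_{i_1},\ldots,x_{i_k})$, and take $p_j \in \Qconv^{(|R_j^\bA|)}$ as the witness in Definition~\ref{def:Qfree}. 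Conditions \eqref{eq:2} and \eqref{eq:4} give $p_j \in \Qconv^{(|R_j^\bA|)}$, while the marginal condition \eqref{eq:5} rewrites as
\[
 w_{i_t}(a) \;=\; \sum_{y \in R_j^\bA,\ y_t = a} p_j(y) \;=\; p_j(\pi_t^{-1}(a)) \;=\; (p_j)_{/\pi_t}(a),
\]
where $\pi_t \colon R_j^\bA \to A$ is the $t$-th coordinate projection used in Definition~\ref{def:Qfree}. Hence $(\sigma(x_{i_1}),\ldots,\sigma(x_{i_k})) \in R_j^{\FF}$, so $\sigma$ is a homomorphism.

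For the backward direction, suppose $\sigma \colon \bX \to \FF_{\Qconv}(\bA)$ is a homomorphism. Set $w_i := \sigma(x_i)$ for each variable; this immediately gives \eqref{eq:1} and \eqref{eq:3} since $w_i \in \Qconv^{(|A|)}$. For each constraint $c_j$, the relational condition $(\sigma(x_{i_1}),\ldots,\sigma(x_{i_k})) \in R_j^{\FF}$ supplies some witness $p_j \in \Qconv^{(|R_j^\bA|)}$ with $w_{i_t} = (p_j)_{/\pi_t}$ for all $t \in [k]$. This $p_j$ directly satisfies \eqref{eq:2} and \eqref{eq:4}, and the minor identity unfolds exactly to \eqref{eq:5}. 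The only subtle point is consistency when a single variable $x_i$ occurs in several constraints, but this is automatic because $w_i = \sigma(x_i)$ is defined once and the minor identity is required to equal this same $w_i$ from every constraint in which $x_i$ appears.

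There is no real obstacle here: the argument is purely a repackaging of Definitions~\ref{subsec:relax} and~\ref{def:Qfree}. The only thing to be careful about is bookkeeping between the two descriptions of $\Qconv^{(L)}$ (as distributions versus convex combination coefficients) and the fact that the projection $\pi_t$ used in the free structure is indexed by the position within the constraint tuple rather than by the global variable name. I would state the observation's proof in two short paragraphs, one per direction, explicitly exhibiting the bijection between the pair $(w,p)$ satisfying \eqref{eq:1}--\eqref{eq:5} and the pair $(\sigma, \{p_j\}_j)$ realising a homomorphism into $\FF_{\Qconv}(\bA)$, and noting that the entirely analogous verification will later handle $\MBA$ in the proof of Lemma~\ref{lem:solvesIffMinionHom}.
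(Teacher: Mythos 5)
Your proposal is correct and matches the paper's argument: the paper presents this observation as an immediate consequence of the preceding paragraph, which carries out precisely the identification you spell out, namely that conditions \eqref{eq:1}, \eqref{eq:3} make each $w_i$ an element of $\Qconv^{(|A|)}$, that \eqref{eq:2}, \eqref{eq:4} make each $p_j$ an element of $\Qconv^{(|R_j^\bA|)}$, and that the marginal constraint \eqref{eq:5} is exactly the minor condition $w_i = (p_j)_{/\pi_{j \to i}}$ appearing in Definition~\ref{def:Qfree}. You have simply written out explicitly the two directions that the paper leaves implicit.
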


Just as in Definition~\ref{def:blpWorks},
we say that ``BLP \emph{correctly solves} $\PCSPD(\bA,\bB)$'' if for every instance $\bX$, feasibility of the $\LP_\QQ(\bX,\bA)$ implies satisfiability of $\bX$ in $\bB$.
(Note the other direction is always trivially true: if $\bX$ is satisfiable in $\bA$, then the relaxation $\LP_\QQ(\bX,\bA)$ has a solution).
Let us write $\bX \to \bA$ if there exists a homomorphism from $\bX$ to $\bA$ (i.e. a satisfying assignment);
we can now restate the definition.

\begin{observation}
	Let $(\bA,\bB)$ be a promise template. The following are equivalent:
	\begin{itemize}	
		\item BLP correctly solves $\PCSPD(\bA,\bB)$;
		\item for every instance $\bX$, $\bX \to \FF_{\Qconv}(\bA)$ implies $\bX \to \bB$.
	\end{itemize}
\end{observation}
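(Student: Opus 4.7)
The plan is to observe that this statement is a direct consequence of the preceding observation, which establishes that feasibility of $\LP_\QQ(\bX,\bA)$ is equivalent to the existence of a homomorphism $\bX \to \FF_{\Qconv}(\bA)$. Thus the proof is essentially just substitution: the statement ``BLP correctly solves $\PCSPD(\bA,\bB)$'' is, by the definition recalled just above the observation (analogous to Definition~\ref{def:blpWorks}), the assertion that for every instance $\bX$, feasibility of $\LP_\QQ(\bX,\bA)$ implies $\bX \to \bB$. Replacing feasibility of $\LP_\QQ(\bX,\bA)$ by the equivalent statement $\bX \to \FF_{\Qconv}(\bA)$ gives precisely the second bullet.

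More concretely, first I would argue the forward direction. Suppose BLP correctly solves $\PCSPD(\bA,\bB)$ and let $\bX$ be any instance with $\bX \to \FF_{\Qconv}(\bA)$. By the preceding observation, $\LP_\QQ(\bX,\bA)$ has a solution. By the definition of ``correctly solves,'' this forces $\bX \to \bB$.

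For the reverse direction, suppose that $\bX \to \FF_{\Qconv}(\bA)$ implies $\bX \to \bB$ for every instance $\bX$. Let $\bX$ be an instance such that $\LP_\QQ(\bX,\bA)$ is feasible. By the preceding observation again, $\bX \to \FF_{\Qconv}(\bA)$, hence by the assumption $\bX \to \bB$, which is exactly the condition for BLP to correctly solve $\PCSPD(\bA,\bB)$.

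There is essentially no obstacle: the whole content lies in the preceding observation, which was already established by interpreting the BLP constraints~\eqref{eq:1}--\eqref{eq:5} as asking that the variable distributions $w_i$ be consistent minors of the constraint distributions $p_j$ under the projection maps $\pi_{j\to i}\colon R^\bA \to A$. Once that identification is made, the equivalence stated here is merely a reformulation using $\FF_{\Qconv}(\bA)$ as the universal target capturing feasibility of BLP. The only thing worth being careful about is that the ``trivially true'' direction (if $\bX \to \bA$ then $\LP_\QQ(\bX,\bA)$ is feasible) plays no role here, since both sides of the equivalence only constrain what happens on instances where the relaxation is feasible.
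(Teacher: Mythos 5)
Your proof is correct and matches the paper's approach exactly: the paper also treats this observation as an immediate restatement obtained by substituting, via the preceding observation, ``$\LP_\QQ(\bX,\bA)$ is feasible'' with ``$\bX \to \FF_{\Qconv}(\bA)$'' into the definition of ``BLP correctly solves.'' Your remark about the trivially true direction playing no role also correctly mirrors the parenthetical note in the paper's surrounding text.
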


\noindent
Entirely analogously, we can restate what it means for BLP+Affine to solve a PCSP (Definition~\ref{def:blpWorks}),
by using the minion $\MBA$ (Definition~\ref{def:MBA}).
\begin{observation}
	Let $(\bA,\bB)$ be a promise template. The following are equivalent:
	\begin{itemize}	
		\item BLP+Affine correctly solves $\PCSPD(\bA,\bB)$;
		\item for every instance $\bX$, $\bX \to \FF_{\MBA}(\bA)$ implies $\bX \to \bB$.
	\end{itemize}
\end{observation}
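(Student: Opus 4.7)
The plan is to imitate the observation about BLP stated immediately before this one, with the bookkeeping adjusted to keep track of both the LP weights and the affine weights as a single object in $\MBA$. First I would unfold Definition~\ref{def:blpWorks}: if $\bX \to \bA$ then the indicator of any satisfying assignment is a $0/1$ LP solution and a $0/1$ integer affine solution, trivially compatible with the support condition, so the algorithm always accepts. Hence ``BLP+Affine correctly solves $\PCSPD(\bA,\bB)$'' is equivalent to ``whenever the algorithm accepts $\bX$, $\bX \to \bB$.'' The observation then reduces to showing the single combinatorial equivalence
\[ \text{the BLP+Affine algorithm accepts }\bX \ \iff\ \bX \to \FF_{\MBA}(\bA).\]

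For the forward direction, suppose the algorithm accepts: there exists a relative interior point $(w,p)$ of $\LP_\QQ(\bX,\bA)$ and an integer solution $(r,q)$ of $\Aff_\ZZ'(\bX,\bA)$. Set $\xi(x_i) := (w_i, r_i)$, viewed as an element of $\MBA^{(|A|)}$. Conditions~\eqref{eq:3} and~\eqref{eq:6} give the sum-to-one requirements, and the refinement $w_i(a)=0 \implies r_i(a)=0$ is exactly the support condition in Definition~\ref{def:MBA}. For each constraint $c_j=(R_j,\bar{x}^j)$, the pair $(p_j,q_j) \in \MBA^{(|R_j^\bA|)}$ serves as the witness $p$ of Definition~\ref{def:Qfree}: the marginalization identities~\eqref{eq:5} and~\eqref{eq:8} are literally the statement $(w_i,r_i) = (p_j,q_j)_{/\pi_{j\to i}}$, where $\pi_{j\to i} \colon R_j^\bA \to A$ projects a satisfying tuple onto the coordinate of $x_i$ in $\bar{x}^j$. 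Thus $\xi$ is a homomorphism from $\bX$ to $\FF_{\MBA}(\bA)$.

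For the converse, suppose $\xi \colon \bX \to \FF_{\MBA}(\bA)$ is witnessed by $\xi(x_i)=(w_i,r_i)$ and constraint witnesses $(p_j,q_j)$. Reading off the definition of $\FF_{\MBA}(\bA)$ and $\MBA$ in reverse, $(w,p)$ is a feasible point of $\LP_\QQ(\bX,\bA)$ and $(r,q)$ is an integer solution of $\Aff_\ZZ(\bX,\bA)$ with $w_i(a)=0 \implies r_i(a)=0$ and $p_j(y)=0 \implies q_j(y)=0$. The algorithm, however, selects its own relative interior point $(\tilde w,\tilde p)$; the definition of relative interior guarantees that its support is the union of the supports of all points of $\LP_\QQ(\bX,\bA)$, so $\tilde w_i(a)=0 \implies w_i(a)=0 \implies r_i(a)=0$, and likewise for $\tilde p_j$ versus $q_j$. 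Therefore $(r,q)$ lies in $\Aff_\ZZ'(\bX,\bA)$ relative to $(\tilde w,\tilde p)$ too, and the algorithm accepts.

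The only real subtlety, and the step most worth highlighting carefully, is this interplay with the relative interior point: $\Aff_\ZZ'(\bX,\bA)$ is defined in terms of whichever LP point the algorithm picks, whereas the condition $\bX \to \FF_{\MBA}(\bA)$ only demands the existence of \emph{some} compatible LP-plus-affine pair. The relative interior property bridges the two by maximizing the support; if the algorithm used an arbitrary LP solution the equivalence would fail. With this subtlety dispatched, the two directions above complete the proof of the observation.
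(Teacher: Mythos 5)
Your proposal is correct and follows essentially the same route the paper intends: the paper dispatches this observation in a single sentence (``Entirely analogously, ... by using the minion $\MBA$''), expecting the reader to carry over the preceding BLP/$\Qconv$ argument, and what you have written out is exactly that carry-over. You correctly reduce ``correctly solves'' to the implication ``accepts $\implies \bX \to \bB$'' by noting the other conjunct is automatic, and then establish the combinatorial equivalence ``accepts $\bX \iff \bX \to \FF_{\MBA}(\bA)$'' by packaging $(w,p)$ and $(r,q)$ into $\MBA$ objects and matching the marginalization constraints \eqref{eq:5}, \eqref{eq:8} with minor conditions.

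Your treatment is in one respect more careful than what the paper supplies: for BLP alone the analogous observation is literally just ``LP feasible $\iff$ homomorphism to the free structure,'' with no dependence on which LP solution is found, whereas for BLP+Affine the algorithm's acceptance is stated in terms of $\Aff_\ZZ'$, which is keyed to the particular relative interior point the algorithm computes. You correctly identify that this is the one genuine gap between ``some compatible LP-plus-affine pair exists'' and ``the algorithm accepts,'' and you close it using the maximal-support property of the relative interior point (in both directions: the satisfiable-in-$\bA$ case also implicitly needs it, since the $0/1$ indicator need not be the algorithm's chosen point). This is exactly the right bridge, and flagging it is a genuine service the paper's terse presentation does not perform.
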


The resulting condition can be simplified by a standard compactness argument.
That~is, we use the following straightforward generalization of the de Bruijn--Erd\H{o}s Theorem
(see e.g.~\cite[Theorem 8.1.3]{Diestel} for a discussion and short proofs, \cite{RorabaughTW17} for general relational structures).

\begin{lemma}[Compactness for structures]
	Let $\bF, \bB$ be relational structures with $F$ infinite and $B$ finite.	
	If every finite induced substructure of $\bF$ admits a homomorphism to $\bB$, then so does $\bF$.\looseness=-1
\end{lemma}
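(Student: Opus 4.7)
The plan is to invoke the classical topological compactness argument for relational homomorphisms. First I would equip $B$ with the discrete topology and form the product space $B^F$, which by Tychonoff's theorem is compact (as a product of finite discrete spaces). For each relation symbol $R$ in the shared signature and each tuple $\bar{x} \in R^\bF$, define
\[
  H_{R,\bar{x}} := \{\sigma \in B^F : \sigma(\bar{x}) \in R^\bB\}.
\]
Since this set is determined by the finitely many coordinates of $\sigma$ indexed by the entries of $\bar{x}$, it is clopen in $B^F$, and in particular closed. A homomorphism $\bF \to \bB$ is by definition nothing other than an element of the intersection $\bigcap_{R,\bar{x}} H_{R,\bar{x}}$ over all symbols $R$ and all tuples $\bar{x} \in R^\bF$.

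Next I would verify the finite intersection property. Given any finite subfamily $\{H_{R_i,\bar{x}^i}\}_{i \in [m]}$, let $F' \subseteq F$ be the (finite) set of elements appearing in $\bar{x}^1, \dots, \bar{x}^m$. The induced substructure $\bF[F']$ is finite and contains every $(R_i,\bar{x}^i)$ by construction, so the hypothesis yields a homomorphism $\sigma' \colon \bF[F'] \to \bB$. Extending $\sigma'$ arbitrarily on $F \setminus F'$ produces a point of $\bigcap_{i\in[m]} H_{R_i,\bar{x}^i}$. Compactness of $B^F$ then forces the total intersection to be non-empty, yielding the desired homomorphism $\bF \to \bB$.

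There is no genuine obstacle here: this is the standard ``compactness for finite relational structures,'' generalising de Bruijn--Erd\H{o}s exactly as the paper alludes. If one wishes to avoid the axiom of choice and $F$ is countable, Tychonoff can be replaced by K\H{o}nig's lemma applied to the tree whose level-$k$ nodes are homomorphisms from the finite substructure induced by the first $k$ elements of $F$ (under some enumeration) to $\bB$, with the parent relation given by restriction --- exactly parallel to the compactness argument already used in the proof of Lemma~\ref{lem:polymImpliesMinionHom}. The only small bookkeeping point is that the induced substructure on $F'$ genuinely contains all the chosen tuples, which is immediate since $F'$ was defined to include every entry of every $\bar{x}^i$.
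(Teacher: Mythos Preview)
Your proof is correct and is exactly the standard Tychonoff/compactness argument for this statement. Note, however, that the paper does not actually prove this lemma: it merely states it as a straightforward generalization of the de~Bruijn--Erd\H{o}s theorem and points to references (Diestel and Rorabaugh--Tardif--Wehlau) for proofs. Your write-up therefore supplies what the paper omits, and does so in the standard way; the alternative K\H{o}nig's-lemma variant you sketch for the countable case is likewise standard and indeed parallels the compactness step in Lemma~\ref{lem:polymImpliesMinionHom}.
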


\noindent
That is, for a promise template $(\bA,\bB)$ and any minion $\Mm$, the following are equivalent:
\vspace*{-1ex}
	\begin{itemize}
		\item for every instance $\bX$, $\bX \to \FF_{\Mm}(\bA)$ implies $\bX \to \bB$;
		\item $\FF_{\Mm}(\bA) \to \bB$.
	\end{itemize}

A fundamental property of free structures is that the latter condition
is equivalent to the existence of a minion homomorphism, as proved by Barto et al.~\cite[Lemma 4.4]{BartoBulinKrokhinEtAl2019}.
\begin{lemma}[\cite{BartoBulinKrokhinEtAl2019}]
	Let $(\bA,\bB)$  be a promise template and let $\Mm$ be any minion. The following are equivalent:
	\begin{itemize}
		\item $\FF_{\Mm}(\bA) \to \bB$;
		\item there exists a minion homomorphism from $\Mm$ to $\Pol(\bA,\bB)$.
	\end{itemize}
\end{lemma}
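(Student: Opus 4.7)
The plan is to prove both implications by explicit, essentially canonical constructions, then verify compatibility with the structure on both sides.

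For the direction ($\Rightarrow$), suppose we have a homomorphism $\xi \colon \FF_{\Mm}(\bA) \to \bB$; I would build a minion homomorphism $\xi^\ast \colon \Mm \to \Pol(\bA,\bB)$ as follows. For $m \in \Mm^{(L)}$ and a tuple $(a_1,\dots,a_L) \in A^L$, interpret the tuple as a function $a \colon [L] \to A$ and set
\[ \xi^\ast(m)(a_1,\dots,a_L) := \xi(m_{/a}). \]
Compatibility with minors is automatic from the minion axioms: $(m_{/\pi})_{/a} = m_{/(a\circ\pi)}$, which after unfolding the functional interpretation gives $\xi^\ast(m_{/\pi}) = \xi^\ast(m)_{/\pi}$. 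The content of the argument is to show that $\xi^\ast(m)$ is a polymorphism. Given a $k$-ary relation symbol $R$ and rows $x^{(1)},\dots,x^{(L)} \in R^\bA$, bundle them as a function $\tau \colon [L] \to R^\bA$ with $\tau(i)=x^{(i)}$, and let $p := m_{/\tau} \in \Mm^{(|R^\bA|)}$. If $\pi_j \colon R^\bA \to A$ is the $j$-th projection, then $p_{/\pi_j} = m_{/(\pi_j\circ\tau)} = m_{/a^{(j)}}$, where $a^{(j)}$ is the $j$-th column of the input matrix. By Definition~\ref{def:Qfree} this exhibits $(m_{/a^{(1)}},\dots,m_{/a^{(k)}}) \in R^{\FF}$, and applying $\xi$ gives a tuple in $R^\bB$, as required.

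For the direction ($\Leftarrow$), given a minion homomorphism $\eta \colon \Mm \to \Pol(\bA,\bB)$ I would fix an enumeration $A = \{a_1,\dots,a_n\}$ and define a map $\phi \colon \FF_{\Mm}(\bA) \to B$ by
\[ \phi(w) := \eta(w)(a_1,\dots,a_n), \qquad w \in \Mm^{(n)}. \]
To verify $\phi$ is a homomorphism, take $(w_1,\dots,w_k)\in R^{\FF}$ witnessed by some $p \in \Mm^{(|R^\bA|)}$ with $w_i = p_{/\pi_i}$. Since $\eta$ preserves minors, $\eta(w_i) = \eta(p)_{/\pi_i}$, so $\phi(w_i)$ equals $\eta(p)$ evaluated on the $i$-th column of the $|R^\bA|\times k$ matrix whose rows enumerate $R^\bA$. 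Because every such row lies in $R^\bA$ and $\eta(p)$ is a polymorphism, applying $\eta(p)$ column-wise lands the tuple $(\phi(w_1),\dots,\phi(w_k))$ in $R^\bB$.

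The step I expect to take most care is the polymorphism check in direction ($\Rightarrow$): one has to juggle three different kinds of objects (a tuple in $R^\bA$, a function $[L] \to R^\bA$, and columns viewed as functions $[L] \to A$) and recognize that the minion composition identity is exactly what certifies membership in the free structure's relations. Once that identification is in place, everything else is bookkeeping with indexings; direction ($\Leftarrow$) is then a direct dualization, using the identity tuple in $A^{|A|}$ as the universal evaluation point.
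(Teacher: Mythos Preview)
Your proof is correct. Note, however, that the paper does not actually give its own proof of this lemma: it is quoted as \cite[Lemma~4.4]{BartoBulinKrokhinEtAl2019} and used as a black box. Your argument is essentially the standard one from that reference---define $\xi^\ast(m)(a_1,\dots,a_L) = \xi(m_{/a})$ in one direction and $\phi(w) = \eta(w)(a_1,\dots,a_{|A|})$ in the other, then verify everything using the composition axiom $(m_{/\pi})_{/\tau} = m_{/(\tau\circ\pi)}$. The only care needed, as you correctly flag, is keeping the identifications $A \cong [|A|]$ and $R^\bA \cong [|R^\bA|]$ straight so that the minor operations typecheck; you have done this correctly.
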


Altogether, this shows that BLP+Affine solves $\PCSP(\bA,\bB)$ if and only if $\MBA$ admits a minion homomorphism to $\Pol(\bA,\bB)$.
This concludes the proof of Lemma~\ref{lem:solvesIffMinionHom} in Section~\ref{sec:char}.

We remark that Barto et al.~\cite[Theorem 7.9]{BartoBulinKrokhinEtAl2019} used the same argument to characterize the power of BLP for PCSPs.

\begin{theorem}[\cite{BartoBulinKrokhinEtAl2019}]
	Let $(\bA,\bB)$ be a promise template.
	The following are equivalent:
	\begin{itemize}
		\item BLP solves $\PCSP(\bA,\bB)$ (as in Definition~\ref{def:blpWorks}),
		\item $\forall_\bX \ \ \bX\to \FF_{\Qconv}(\bA) \implies \bX\to \bB$,
		\item $\FF_{\Qconv}(\bA) \to \bB$,
		\item $\Qconv$ admits a minion homomorphism to $\Pol(\bA,\bB)$,
		\item $\Pol(\bA,\bB)$ contains symmetric polymorphisms of every arity.
	\end{itemize}
\end{theorem}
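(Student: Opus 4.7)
The plan is to follow the template established by the appendix and by Lemmas~\ref{lem:solvesIffMinionHom}--\ref{lem:polymImpliesMinionHom}, but specialized to BLP and the minion $\Qconv$. The first three equivalences---(i) BLP correctly solves $\PCSPD(\bA,\bB)$, (ii) $\bX \to \FF_{\Qconv}(\bA)$ implies $\bX \to \bB$ for every instance $\bX$, and (iii) $\FF_{\Qconv}(\bA) \to \bB$---are handled exactly as in the appendix. The equivalence (i) $\Leftrightarrow$ (ii) follows because a solution to $\LP_\QQ(\bX,\bA)$ is precisely a homomorphism $\bX \to \FF_{\Qconv}(\bA)$ under Definition~\ref{def:Qfree}: the marginal-consistency equations~\eqref{eq:5} encode exactly membership in $R^{\FF_{\Qconv}(\bA)}$ via the minor operations $\pi_{j\to i}$. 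The equivalence (ii) $\Leftrightarrow$ (iii) is the compactness lemma for relational structures with finite target cited at the end of the appendix. Finally, (iii) $\Leftrightarrow$ (iv), i.e. existence of a minion homomorphism $\Qconv \to \Pol(\bA,\bB)$, is an instance of the free-structure lemma of Barto et al.~(Lemma 4.4 of \cite{BartoBulinKrokhinEtAl2019}) with $\Mm = \Qconv$.

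The remaining work is the equivalence of (iv) with (v), the existence of symmetric polymorphisms of every arity. The direction (iv) $\Rightarrow$ (v) is easy: for each $L \in \NN$, the uniform distribution $u_L \in \Qconv^{(L)}$ defined by $u_L(i) = 1/L$ satisfies $(u_L)_{/\pi} = u_L$ for every permutation $\pi$ of $[L]$. A minion homomorphism $\xi \colon \Qconv \to \Pol(\bA,\bB)$ preserves arities and minors, so $\xi(u_L)$ is an $L$-ary polymorphism with $\xi(u_L)_{/\pi} = \xi(u_L)$ for every $\pi$, i.e.\ a symmetric polymorphism. For the direction (v) $\Rightarrow$ (iv), I would mirror the proof of Lemma~\ref{lem:polymImpliesMinionHom} with $(w,r)$ replaced by $w$ throughout, so the role of $\ell$ (the common denominator) is retained and the parameter $M$ disappears. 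Concretely, given $w \in \Qconv^{(L)}$ with common denominator $\ell$, so $w(i) = a_i/\ell$ with $a_i \in \ZZ_{\geq 0}$ and $\sum_i a_i = \ell$, pick any symmetric polymorphism $f$ of arity $\ell$ and assign to $w$ the $L$-ary minor of $f$ in which the $i$-th variable is repeated $a_i$ times. Full symmetry of $f$ makes this assignment independent of the ordering.

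To turn this partial assignment into a minion homomorphism on all of $\Qconv$, apply the same K\H{o}nig's lemma compactness as in Lemma~\ref{lem:polymImpliesMinionHom}: for $k \in \NN$ let $\Mm_k \subseteq \Qconv$ consist of those $w \in \Qconv^{(L)}$ with $L \leq k$ and $k! \cdot w(i) \in \ZZ$; each $\Mm_k$ is finite, $\Mm_k \subseteq \Mm_{k+1}$, and $\bigcup_k \Mm_k = \Qconv$. Organize the (finitely many) partial minion homomorphisms $\Mm_k \to \Pol(\bA,\bB)$ obtained above into an infinite, finitely-branching tree under restriction, and extract an infinite path whose union is the desired $\Qconv \to \Pol(\bA,\bB)$. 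The main obstacle, and the one delicate step, is verifying that the assignment $w \mapsto f$-minor really respects the minor operation: for $\pi \colon [L] \to [L']$, the image of $w_{/\pi}$ must coincide with the $(\cdot)_{/\pi}$-minor of the image of $w$, which requires using symmetry of $f$ to freely reorder the blocks of repeated variables---precisely the argument used in Lemma~\ref{lem:polymImpliesMinionHom}, but now in the simpler ``one-parameter'' setting without the integer weights $r$.
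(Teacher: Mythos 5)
Your proof is correct and takes essentially the same route that the paper points at: it reconstructs the theorem (which the paper cites from Barto et al.\ rather than reproving) by re-running the appendix's free-structure/compactness observations with $\Mm = \Qconv$ and then specializing Lemmas~\ref{lem:minionHomImpliesPolym} and~\ref{lem:polymImpliesMinionHom} to the one-coefficient setting. Your observation that the parameter $M$ and the rounding decomposition $L^* = u\ell + v$ drop out entirely---one can simply take $f$ of arity $\ell$ (or $k!$ uniformly across $\Mm_k$) and use $W_i = \ell w(i)$ directly---is the right simplification and matches what one would write when deriving Theorem~7.9 of Barto et al.\ from scratch. One small point worth making explicit (as the paper does for $\MBA$): the tree in the K\H{o}nig argument should consist of \emph{restrictions} of genuine minion homomorphisms to the finite sets $\Mm_k$, since $\Mm_k$ itself is not closed under minors and so is not a minion; and finite branching relies on $\Pol(\bA,\bB)^{(L)}$ being finite for each $L$, which holds because $\bA,\bB$ are finite.
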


Our argument thus only differs in the equivalence of the last two bullets, an analogue of which is proved in Section~\ref{sec:char}.
Finally, let us note that in~\cite[Theorem 7.19]{BartoBulinKrokhinEtAl2019}, the power of the Affine relaxation alone was similarly characterized by the minion $\mathcal{Z}_{\mathrm{aff}}$, defined analogously to $\Qconv$, except with integer coefficients (not necessarily non-negative): the $L$-ary objects are $r \colon [L] \to \ZZ$ such that $\sum_{i\in [L]} r(i) = 1$.

\pagebreak[3]

\bibliographystyle{alpha}
\bibliography{symmetric}

\end{document}